\newcommand{\eps}{\epsilon}
\newcommand{\bs}{\bar{\sigma}}
\newcommand{\bmv}{\bar{\mu}_v}
\newcommand{\bW}{\mathbf{W}}
\renewcommand{\O}{\mathcal{O}}
\newcommand{\R}{\mathbb{R}}
\newcommand{\rnn}{\R^3_{>0}}
\newcommand{\e}{\mathrm{e}}
\newcommand{\bd}{\bar{\delta}}
\newcommand{\bmh}{\bar{\mu}_h}
\newcommand{\bg}{\bar{\gamma}}
\newcommand{\bR}{\bar{R}_0}
\renewcommand{\S}{\mathcal{S}}
\newcommand{\rd}{\mathrm{d}}
\newcommand{\bX}{\bar{X}}
\newcommand{\bY}{\bar{Y}}
\newcommand{\bZ}{\bar{Z}}
\newcommand{\hZ}{\hat{Z}}
\newcommand{\hbW}{\hat{\mathbf{W}}}
\newcommand{\bQ}{\mathbf{Q}}
\newcommand{\I}{I}
\newcommand{\fX}{\mathfrak{X}}
\newcommand{\hfX}{\hat{\mathfrak{X}}}
\newtheorem{thm}{Theorem}
\newtheorem{rmk}{Remark}
\newtheorem{lem}{Lemma}
\newtheorem{prop}{Proposition}
\def\squareforqed{\hbox{\rlap{$\sqcap$}$\sqcup$}}
\def\qed{\ifmmode\else\unskip\quad\fi\squareforqed}
\def\smartqed{\def\qed{\ifmmode\squareforqed\else{\unskip\nobreak\hfil
\penalty50\hskip1em\null\nobreak\hfil\squareforqed
\parfillskip=0pt\finalhyphendemerits=0\endgraf}\fi}}
\begin{document}
\title{Multiscale Analysis for a Vector-Borne Epidemic Model}
\author{Max O. Souza}
\address{%
Departamento de Matem\'atica Aplicada, Universidade Federal
Fluminense, R. M\'ario Santos Braga, s/n, 22240-920, Niter\'oi, RJ, Brasil.
}
\email{msouza@mat.uff.br}

\date{\today}

\thanks{The author acknowledges many useful discussions held in the Dengue Modeling initiative developed at the CMA/FGV, where part of this work was performed. The author also acknowledges the workshops and support of PRONEX Dengue under  CNPQ grant \# 550030/2010-7. The author is partially supported by CNPq grant \# 309616/2009-3 and FAPERJ grant \# 110.174/2009.}

 \begin{abstract}
Traditional studies about  disease  dynamics have focused on global stability issues, due to their epidemiological importance. We study a classical SIR-SI model for arboviruses in two different directions: we begin by describing an alternative proof of previously known global stability results by using only a Lyapunov approach. In the sequel,  we  take a different view and we  argue that vectors and hosts can have very distinctive intrinsic time-scales, and that such distinctiveness extends to the disease dynamics.   Under these hypothesis, we show that  two asymptotic regimes naturally appear: the fast host dynamics and the fast vector dynamics. The former regime yields, at leading order,  a SIR model for the hosts, but  with a rational incidence rate. In this case, the vector disappears from the model, and the dynamics is similar to a directly contagious disease. The latter yields a SI model for the vectors, with the hosts disappearing from the model. Numerical results show the performance of the approximation, and a rigorous proof validates the reduced models.

\keywords{Arboviruses, Dengue, Lyapunov Functions, Multiscale Asymptotics}

\subjclass{Primary 92D30; Secondary 34E13}
\end{abstract}

\maketitle

\section{Introduction}

Vector-borne diseases in general, and arboviruses in particular, are a contemporary major challenge for epidemiologists, public health officers to name a few.  Indeed, while  in the turning  from the nineteenth  to the twentieth century witnessed a discovering caused by arboviruses as for instance dengue and malaria, the turning from the twentieth to twentieth first century is witnessing a sustained  emergence of these diseases around the globe. Presently, Dengue is is a leading cause of serious illness and death among children in some Asian and Latin American countries---\cite{who:dengue}--- and Malaria pathogens are acquiring resistance to the first line of treatment in South Asia \cite{Breman:2012fu}. Additionally, the West Nile virus is now endemic in Africa, Asia, Oceania and it now established North America  \cite{Petersen:2012ye}, while Chinkunguya disease which has its origins in Africa has now progressed into Southern Asia and Oceania---see \cite{Powers:2000kl,Powers:2010qa}---and now there are documented cases in Europe \cite{cdc:chikungunya}. Such an emergence seems to be mainly caused by the spreading of some the associated vectors. Thus, \textit{Aedes aegypti} and \textit{Aedes albopictus} have experienced a major increase in spreading in recent decades \cite{Lambrechts:2010kh}. Moreover, some of the diseases evolve from being benign   into more lethal forms. This is the case, for instance, of the Dengue  Haemorrhagic Fever which has become significantly more prevalent in recent years cf. \cite{Gubler:1998mi}. Thus even for  innocuous diseases as Chinkunguya disease, there are concerns of further antigenic evolution---see for instance \cite{Rajapakse:2010fu}.  In addition, there are also arboviruses that attack only animals as, for instance  heartworm in dogs, which is transmitted by \textit{Aedes albopictus} \cite{Gratz:2004}. Finally, we also have a number of diseases caused by tick \cite{Dantas-Torres:2012qo}.

Traditional modelling in epidemiology focuses on global stability of equilibria, since this characterises if a disease will become endemic and this is a major concern for public health officers. This view dates back to the original work by Ross on Malaria---cf. \cite{Ross1911} and \cite{Macdonald1957}, where the concept of a basic reproductive number ($R_0$) was introduced and became a modelling paradigm---see \cite{Smith:2012} for a very recent review on the works by Ross and Macdonald from a medical modelling point of view. In a fairly large class of models, we can define $R_0$ unambiguously and it can be shown that if $R_0<1$ the disease is extinct while if $R_0>1$ it becomes endemic---see \cite{Diekmann:2000}. In the former case this usually means that the disease-free state is locally asymptotically stable, while in the latter case may indicates an analogous situation for a disease-present state or simply that the disease is prevalent---in the sense used in dynamical systems parlance. The literature on mathematical epidemiology is too vast, and we limit ourselves to some references:
\cite{britton2003,Murray:2002} for textbook introductions and  \cite{Diekmann:2000,Anderson:1995,Bailey75,Dietz75} for both  contemporary and classical research monographs; see also the reviews in  \cite{Grassly:2008kx,Nishiura2006,Andraud:2012cq,Koella1991,Mandal:2011ve,Luz:2010}.

In the specific case of arboviruses, the dynamics and ecology  of both the vector and host turn out to be important. We refer the reader to \cite{Lord:2007ys,Scott:2000dq,Muir:1998qf,Scott:2000cr,Barrera2006} for general information in describing the  ecology of mosquitoes in general and of the \textit{Aedes  aegypti} and \textit{Aedes albopictus} in particular.   From a modelling perspective, a number of questions have been investigated as the possibility of vertical transmission  for Dengue in \cite{Adams:2010pi}, effect of temperature \cite{Yang:2009tg,Yang:2009kl},  insecticide resistance \cite{Luz:2009rw}, and the connection between imported cases and co-infections by different serotypes \cite{Aguiar:2011ff}.

In what follows, we take a different view and look at different features of  the dynamics of such arboviruses diseases. More specifically, we argue that  the vector and hosts can have very different time-scales, and upon this assumption, show how  to use classical ideas of asymptotic analysis to derive new models from old ones, and how these simplified models might contribute to the understanding of such dynamics.

\subsection{Time-scales in vector-borne diseases}
\label{ssec:intro_ts}

Vector-borne diseases are different from direct contagious ones, since there is indirect transmission from host to the vector and vice-versa. Their dynamical behaviour will depend both on the dynamics of the vector, of the host and on their interaction. In order to organise the discussion, we  take as a basic framework  the simplest, and probably the most natural, model from the point of view of mass-action epidemiological modelling: the coupling of a SIR model for the host and a SI model for the vector that was first developed by \cite{Bailey75,Dietz75}, and  is given schematically in Figure~\ref{mfig}.
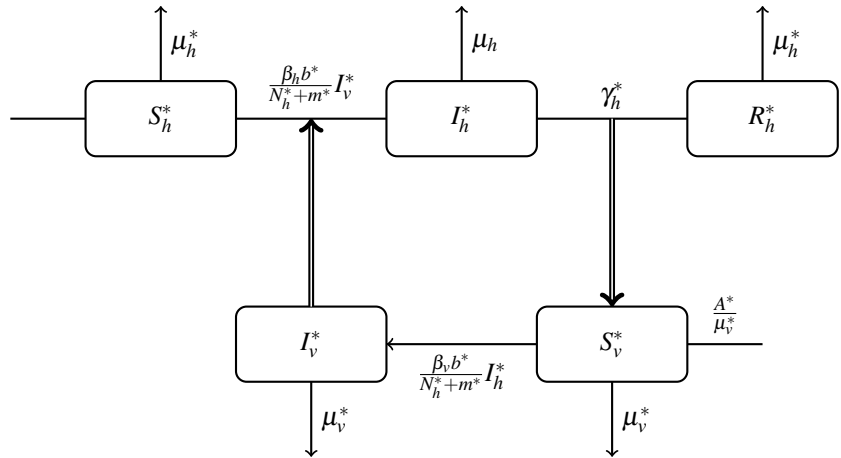
\begin{figure}[htb]
 \begin{center}
  \begin{tikzpicture}[thick,rounded corners]
 \draw  (0,0) rectangle (2,1);
\draw (1,0.5) node {$S^*_h$};
\draw (4,0) rectangle (6,1);
\draw (5,0.5) node {$I^*_h$};
\draw (8,0) rectangle (10,1);
\draw (9,0.5) node {$R^*_h$};
\draw (2,-2) rectangle (4,-3);
\draw (3,-2.5) node {$I^*_v$};
\draw (6,-2) rectangle (8,-3);
\draw (7,-2.5) node {$S^*_v$};
\draw (-1,0.5)--(0,0.5);
\draw (2,0.5)--(4,0.5);
\draw (3,0.5) node [above] {$\frac{\beta_hb^*}{N_h^*+m^*}I_v^*$};
\draw (6,0.5)--(8,0.5);
\draw (7,0.5) node [above] {$\gamma_h^*$};
\draw (9,-2.5)--(8,-2.5);
\draw (8.5,-2.5) node [above] {$\frac{A^*}{\mu_v^*}$};
\draw [->] (6,-2.5)--(4,-2.5);
\draw (5,-2.5) node [below] {$\frac{\beta_vb^*}{N_h^*+m^*}I_h^*$};
\draw [double,->] (3,-2)--(3,0.5);
\draw [double,->] (7,0.5)--(7,-2);
\draw [->] (1,1)--(1,2);
\draw (1,1.5) node [right] {$\mu_h^*$};
\draw [->] (5,1)--(5,2);
\draw (5,1.5) node [right] {$\mu_h$};
\draw  [->](9,1)--(9,2);
\draw (9,1.5) node [right] {$\mu_h^*$};
\draw [->] (3,-3)--(3,-4);
\draw (3,-3.5) node [right] {$\mu_v^*$};
\draw [->] (7,-3)--(7,-4);
\draw (7,-3.5) node [right] {$\mu_v^*$};
  \end{tikzpicture}
\caption{Compartmental description of the arbovirus model by \cite{Bailey75,Dietz75}. \label{mfig}}
 \end{center}
\end{figure}
\begin{table}[t]
\begin{center}
 \begin{tabular}{l||l}
 Parameter&Meaning\\\hline
$N^*_h$ and $N^*_v$ & Number of hosts and vectors;\\\hline
$\mu^*_h$ and $\mu^*_v$& birth rate for hosts and vectors; \\\hline
$\beta^*_h$ and $\beta^*_v$& probability of a host being infected by a vector and vice-versa; \\\hline
$b^*$& Biting rate;\\\hline
$\gamma^*$& removal rate;    \\\hline
$m^*$& Number of alternative blood sources; \\\hline
$A^*$& Vector recruitment rate.\\
\end{tabular}
\end{center}
\caption{Description of parameters  meaning in the compartmental model depicted in Figure~\ref{mfig}.\label{tbl:param}}
\end{table}

The meaning of the model parameters that appear in  Figure~\ref{mfig} is given in Table~\ref{tbl:param}. Here starred variables indicate dimensional quantities.  The variables $S$, $I$ and $R$ have the usual epidemiological meaning, with the subscript indicating if they refer to the hosts or to the vectors. In Figure~\ref{mfig}, we can view each arrow single arrow as a clock that determines when there is a compartment transition. The typical time-scale of these clocks will depend on the particular value of the parameters, and we  give some indication below of the possible scales that one might expect to observe. As in any description of parameters of epidemiological models, we should stress that their variability can be very large, and extremely dependent on particular factors as the pathogen itself, the vector, and the geography among many others. See \cite{Luz:2003ly} and \cite{Desenclos:2011tw} for two different discussions about these variability.

For the host compartments, that are four clocks: $N^*_h\mu^*_h$, $\mu_h^*$, $\beta_hb^*/(N_h^*+m^*)I^*_v$ and $\gamma^*$. The choice of the first one is a standard simplification, and implies that the population size remains constant. While this is a reasonable modelling assumption in many cases,  there might be exceptions when one is interested in a long time span, or when the mortality is high due to the deaths caused by the disease \cite{Ngwa:2000}. The second one is the host death rate, and it is related to average host life expectancy. Naturally, it depends both on  the host and on  the particular geographical region, but for humans we  have $1/mu_h\geq50$ years, unless  in case of diseases with severe mortality, as it can be the case of large epidemics of yellow fever due to the hight mortality of the severe cases \cite{Tomori:2004fc}.  The third clock  controls the infection events, and as we shall see in Section~\ref{sec:prelim}, it can be faster or slower than the analogous clock for the vectors. The fourth clock is the removal rate. Its values depends on the particular disease and on the particular host. For instance, for Dengue $1/\gamma$ can vary from 1 to eight days, but there are indications that it can be larger in patients with Dengue Haemorrhagic Fever \cite{Chastel:2012gb}. In Malaria it can varies from $5$ to $10$ days in treated adults, but it can be as large as $48$ days in young children, and $120$ days in non-treated patients \cite{Okell:2008bh}.

For the vector compartments, there are three clocks: $A^*/\mu_v^*$,  like its host analogue, is  chosen with $A^*$ constant, so that the vector population size also remains constant.  The second clock is the vector death rate. In laboratory, one can have female mosquitoes living about $20$ days, cf. \cite{Styer:2007nx}, while in the wild this expected lifespan can be as small as $2$ days \cite{David:2009}. These values are also affected by variables as temperature, rainfall among others \cite{Chen:2012jl}. The last clock describes the frequency at which mosquitoes get infected. It has a large variance depend on the species---e.g.. \cite{Aldemir:2010zt}. For \textit{Aedes aegypti},  it can be larger due to the fact that  its feeding is easily disturbed, and hence it might need to bite many hosts in order to complete its feeding \cite{Scott:1993rq}.

The discussion above suggest that there might be a number of situations, where one can have all the vector clocks faster than the host clocks; we call such a situation the \textit{Fast Vector Dynamics}. The dual regime, the \textit{Fast Host Dynamics}, where one has the host clocks faster than the vector clocks seems less likely but, as observed above, situations like large epidemics of yellow fever might be a possible scenario where such model is relevant.  A more precise identification of these different regimes in terms of the underlying parameters is  deferred to Section~\ref{sec:prelim}.

\subsection{Outline}
In Section~\ref{sec:prelim}, we describe the basic model studied and review some of its properties. An alternative presentation of  global stability results, using only a Lyapunov approach, is given in Section~\ref{sec:global}.
In Section~\ref{sec:fvd}, we study the \textit{Fast Vector Dynamics} limit. This regime yields a reduced SIR system with  modified non-linear incidence rates In addition, it also yields  a transition layer corrector for which an explicit solution can be written, and an explicit representation of the so-called slow manifold can be found.   We also present a number of numerical illustrations of the approximation together with a theorem that guarantees that such an approximation is uniform for all time, with the proof given in \ref{app:asymp_proof}. 
The dual limit, the \textit{Fast Host Dynamics}, yields a reduced SI system and it seems to be somewhat less interesting from a biological point of view; it is presented briefly in Section~\ref{sec:fhd}; the numerical results are similar to the ones obtained for the \textit{Fast Vector Dynamics} and thus are omitted. Nevertheless, we do illustrate the reduction to a one-dimensional slow manifold. Section~\ref{sec:conclusion} draws some concluding remarks.

\section{Preliminaries}
\label{sec:prelim}

\subsection{Review of the model and non-dimensionalisation}

The compartmental model shown in Figure~\ref{mfig} is described by the following system:
\begin{equation}
\left\{
\begin{array}{rcl}
\dot{S_h^*}&=&\mu_h^*(N_h^*-S_h^*)-\frac{\beta_h^*b^*}{N_h^*+m^*}S_h^*I_v^*\\
\dot{I_h^*}&=&\frac{\beta_h^*b^*}{N_h^*+m^*}S_h^*I_v^* - (\mu_h^*+\gamma^*)I_h^*\\
\dot{R_h^*}&=&\gamma^* I_h-\mu_h^*R_h^*\\
\dot{S_v^*}&=&A^*-\mu_v^*S_v^* - \frac{\beta_v^*b^*}{N_h^*+m^*}S_v^*I_h^*\\
\dot{I_v^*}&=&\frac{\beta_v^*b^*}{N_h^*+m^*}S_v^*I_h^*-\mu_v^*I_v^*
\end{array}
\right.
\label{mod:bas}
\end{equation}
System~\eqref{mod:bas}  has been comprehensively studied by \cite{Esteva1998}, and extensively used for studies of Dengue as described, for instance, in \cite{Nishiura2006}. 

We non-dimensionalise  system~\eqref{mod:bas} by letting
\[
 (S_h^*,I_h^*,R_h^*)=N_h^*(S_h,I_h,R_h)\quad\text{and}\quad
(S_v^*,I_v^*)=\frac{A^*}{\mu_v^*}(S_v,I_v).
\]
Also
\[
 t^*=\left(\Delta^*\right)^{-1} t,
\]
where $\left(\Delta^*\right)^{-1}$ is, for the time being, an arbitrary time-scale. We immediately obtain the new system:
\begin{displaymath}
\left\{
\begin{array}{rcl}
\dot{S_h}&=&\mu_h(1-S_h)-\delta S_hI_v\\
\dot{I_h}&=&\delta S_hI_v - (\mu_h+\gamma)I_h\\
\dot{R_h}&=&\gamma I_h-\mu_h R_h\\
\dot{S_v}&=&\mu_v(1-S_v) - \sigma S_vI_h\\
\dot{I_v}&=&\sigma S_vI_h-\mu_v I_v
\end{array}
\right.
\end{displaymath}
where
\[
 \delta= \frac{\beta_h^*b^*A^*}{\mu_v^*\Delta^*(N_h^*+m^*)}\quad
 \text{and}\quad
\sigma=\frac{\beta_v^*b^*N_h^*}{\Delta^*(N_h^*+m^*)}
\]
Also 
\[
 \gamma=\frac{\gamma^*}{\Delta^*},\quad
 \mu_h=\frac{\mu_h^*}{\Delta^*}\quad\text{and}\quad
\mu_v=\frac{\mu_v^*}{\Delta^*}.
\]
As observed in \cite{Esteva1998}, the non-negative orthant of $\R^5$ is invariant by the flow of \eqref{mod:bas}, and it is conservative, i.e., if  the initial values of the host compartments sum to $N_h^*$ and the initial values of the vector compartments sum to $A^*/mu^*_v$, then this holds for all time.

In view of this observation,  if the initial conditions for the host fractions add to one, with the same being true for the vector fractions initial conditions, then this is preserved by the evolution. Therefore, we work with the simplified, but equivalent, model below:
\begin{equation}
 \left\{ 
\begin{array}{rcl}
 \dot{X} &=& \mu_h(1-X)-\delta XZ\\
\dot{Y}  &=& \delta XZ - (\mu_h+\gamma)Y\\
\dot{Z}  &=&  \sigma(1-Z)Y-\mu_v Z
\end{array}
\right.
\label{red:sys}
\end{equation}
where
\[
X=S_h,\quad Y=I_h
\quad\text{and}\quad
Z=I_v.
\]

System \eqref{red:sys} has  the following  two equilibrium points: 
\begin{enumerate}
 \item The disease free equilibrium: $X^*=1$, $Y^*=Z^*=0$.
\item The endemic equilibrium:
\[
 X^*=\frac{1}{R_0}\frac{1+R_0D_0}{1+D_0},\quad Y^*=\frac{\mu_h}{\mu_h+\gamma}(1-X^*),\quad Z^*=D_0\frac{(1-X^*)}{X^*},
\]
\[
 R_0=\frac{\sigma\delta}{\mu_v(\mu_h+\gamma)}\quad\text{and}\quad D_0=\frac{\mu_h}{\delta}.
\]
\end{enumerate}

The dynamics of system~\eqref{red:sys} was studied by \cite{Esteva1998} who showed, using the theory of monotone dynamical systems,  the global stability of the equilibria: the disease free equilibrium for $R_0\leq1$,  and the endemic equilibrium for $R_0>1$. They also showed that, for sufficiently small  $\mu_h$, the approach to the endemic equilibrium is oscillatory.  We shall return to the question of  global stability in Section~\ref{sec:global}.

\subsection{Scalings}

We now discuss what parameters values in system \eqref{red:sys} may lead to fast vector dynamics.   We refer the reader to the introduction for a more biological discussion and for more general references.

We use $\Delta^*=\gamma^*/2$, so that $\left(\Delta^*\right)^{-1}$ is twice the typical time that host remains infectious once it has acquired the disease. Thus, we always have $\gamma=1/2$.  We also take $m^*=0$ for simplicity, and $\beta_h=\beta_v=0.5$.For the host death rate, we  assume an average life expectancy of 60 years; thus we use $\mu_h^*=0.0000463$ $\mathrm{days}^{-1}$.

As discussed above, the infectious time can vary depending on the disease, age, and if there is treatment available. The biting rate will depend on the particular vector, and of the are being modelled. Also, a typical value of $A^*/\mu_v^*$ is dependent on the infestation level. Therefore, we do not make any a priori-hypothesis on those values, and instead we  show the variation of the non-dimensional parameters. Also, since $\mu_h^*$ is quite small, $\mu_h$ is also  small, and we concentrate on $\delta$, $\mu_v$ e $\sigma$.

We begin with $\mu_v$,  and show contour plots as a function of $(\Delta^*)^{-1}$ and of the expected lifespan of the mosquito in Figure~\ref{fig:muv_scaling}.  
\begin{figure}[htbp]
\begin{center}
\includegraphics[scale=0.4]{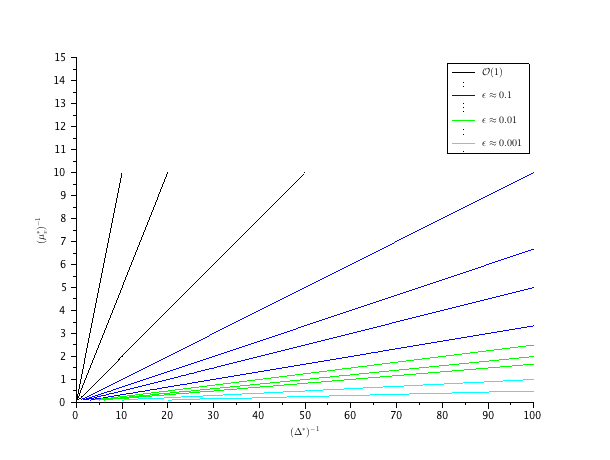}
\caption{Scaling  for $\mu_v$ depending  on the average lifespan of the vector, and on twice the average infectious time of the host. Although the level curves are just lines, the graph helps to assess the order size of the parameters. A level curve with value $L$ was labelled with a certain value of $\epsilon$, if $\log(L\epsilon)\in [-1,0.8]$. This is equivalent to say that $L=a\epsilon^{-1}$, $a\in [a_0,a_1]$, with $a_0\approx0.2$ and $a_1\approx2$. Based on the discussion in Section~\ref{ssec:intro_ts} and on the values on the level curves, we observe that the range $0.01\leq\epsilon\leq0.1$ seems to match a number of possible scenarios in epidemiological modelling.\label{fig:muv_scaling} }
\end{center}
\end{figure}

With the assumed parameters, we have that
\[
\sigma=b^*/2\Delta^*
\quad\text{and}\quad 
\delta=\sigma \I.
\] 
Thus, in Figure~\ref{fig:sigma_delta_scaling } we plot the level curves of these expressions. See the corresponding caption for more details, and the caption corresponding to Figure~\ref{fig:muv_scaling} for an explanation about the labelling of the level curves.

\begin{figure}[htbp]
\begin{center}
\subfloat{\includegraphics[scale=0.345]{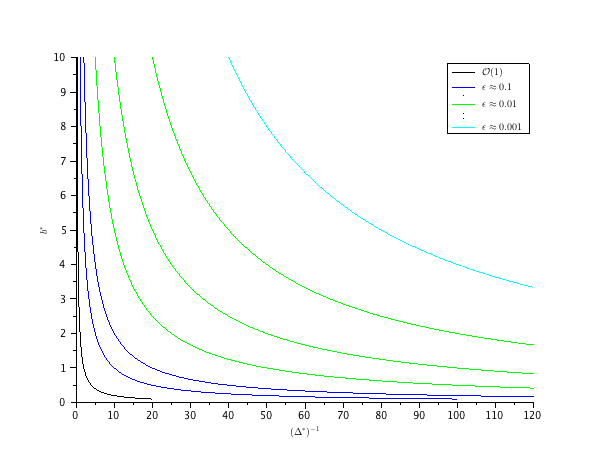}}
\subfloat{\includegraphics[scale=0.345]{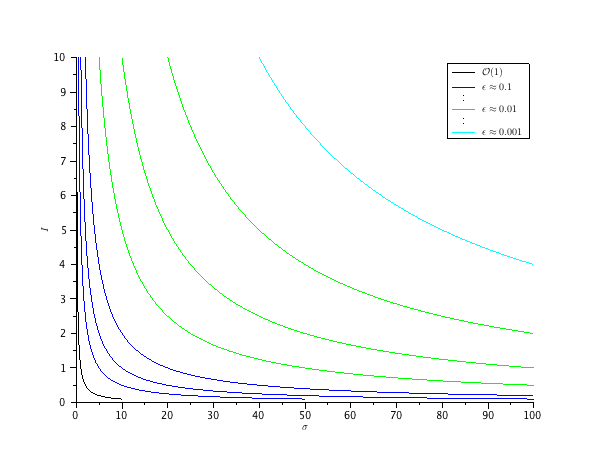}}
\caption{Possible scalings  for $\sigma$ and $\delta$.For the values of $b^*$ that obtained in the literature, the level curves of $\sigma$ are naturally around $\sigma\sim(0.01)^{-1}$.  Notice that  we  can easily have $\delta$  of the same order of magnitude than $\sigma$, in the case of reasonably large infestations. However,   for more moderate cases with $ \I\approx 0.1$,  we typycally have $\delta=\O(1)$. We refer to the discussion in Section~\ref{ssec:intro_ts} for a discussion about these values. \label{fig:sigma_delta_scaling }}
\end{center}
\end{figure}

The pictures in Figures~\ref{fig:muv_scaling} and \ref{fig:sigma_delta_scaling } suggest both that the proposed regimes are effective, but also that might be a number of other regimes that need to be studied in addition. Additionally, we point out  that $\epsilon=0.01$, seems to be a plausible scaling for Dengue in urban centres---the reader is referred  to  the discussion in Section~\ref{ssec:intro_ts}.
\newpage

\section{Global stability analysis}
\label{sec:global}

Traditionally, the main focus on study about epidemiological systems as~\eqref{mod:bas} is on global stability issues. 
In particular a proof of global stability for system~\eqref{mod:bas} using the theory of competitive systems can be found in \cite{Esteva1998}. More recently, \cite{Cai2009} has studied a similar system but with a saturated bilinear incidence. In  particular, \cite{Cai2009} proved global stability by using a Lyapunov function for the disease-free equilibrium, and using the theory of competitive systems for the endemic equilibrium, whenever it exists. \cite{Tewa2009} has studied the same system, and provide a Lyapunov proof for the disease-free. The proof presented by \cite{Tewa2009} for the endemic equilibrium seems to applies only  if, in our notation, one has $X^*=1$. But in this case, we must necessarily have $R_0=1$.  Also a proof using a Lyapunov approach for the disease free equilibrium, and another proof using the theory of competitive systems for the endemic equilibrium can be found in \cite{Yang2010-04-01}.

In what follows,  we provide an alternative proof for the global stability of \eqref{mod:bas}  using only Lyapunov functions. 

\begin{thm}
\label{thm:3d_gs}
 Let $R_0$ be defined as above. Then for $R_0\leq 1$ the disease-free equilibrium is globally asymptotic stable, while for $R_0>1$ the endemic equilibrium is globally asymptotic stable.
\end{thm}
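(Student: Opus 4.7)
My plan is to handle the two regimes separately, using Lyapunov functions combined with LaSalle's invariance principle on the compact positively invariant region $\Omega = \{(X,Y,Z)\in\R^3 : X,Y,Z\geq 0,\ X+Y\leq 1,\ Z\leq 1\}$, whose invariance I would verify first by checking the sign of the vector field of \eqref{red:sys} on $\partial\Omega$. For the disease-free case $R_0\leq 1$, I try the linear candidate
\[
V_{\mathrm{DFE}}(Y,Z) \;=\; \sigma Y + (\mu_h+\gamma)Z.
\]
A direct computation along \eqref{red:sys} yields $\dot V_{\mathrm{DFE}} = \sigma\delta XZ - \sigma(\mu_h+\gamma)YZ - \mu_v(\mu_h+\gamma)Z$, and since $X\leq 1$ and $Y\geq 0$ on $\Omega$ this simplifies to $\dot V_{\mathrm{DFE}} \leq \mu_v(\mu_h+\gamma)(R_0-1)Z \leq 0$. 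The largest invariant set in $\{\dot V_{\mathrm{DFE}}=0\}$ forces $Z\equiv 0$ (and, at $R_0=1$, additionally $X\equiv 1$ or $Y\equiv 0$); the $Y$-equation then gives $Y\to 0$ and the $X$-equation gives $X\to 1$, so LaSalle delivers global asymptotic stability of the DFE.

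For the endemic case $R_0>1$, I attempt a Goh--Volterra Lyapunov function
\[
V_{\mathrm{EE}} \;=\; X^* g\!\left(\tfrac{X}{X^*}\right) + Y^* g\!\left(\tfrac{Y}{Y^*}\right) + K\, H(Z,Z^*), \qquad g(u) := u - 1 - \ln u \;\geq\; 0,
\]
where $K>0$ and the convex function $H(\cdot,Z^*)\geq 0$ (vanishing only at $Z^*$) are still to be chosen. The natural first guess is $H(Z,Z^*)=Z^* g(Z/Z^*)$; if the ensuing expansion of $\dot V_{\mathrm{EE}}$ does not close, the logistic factor $(1-Z)$ in the $Z$-equation suggests instead taking $\partial_Z H = (Z-Z^*)/(1-Z)$, so that $\partial_Z H \cdot \sigma(1-Z)Y$ becomes linear in $Y$ and pairs cleanly with the $\dot V_2$-terms. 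The constant $K$ is then tuned by means of the three equilibrium identities $\mu_h(1-X^*) = \delta X^*Z^* = (\mu_h+\gamma)Y^*$ and $\mu_v Z^* = \sigma(1-Z^*)Y^*$ so that, after expansion, $\dot V_{\mathrm{EE}}$ assumes the canonical form
\[
\dot V_{\mathrm{EE}} \;=\; -\mu_h\,\frac{(X-X^*)^2}{X} \;-\; \delta X^*Z^*\sum_{i} g(\rho_i) \;-\; \mu_v\,\Psi(Z,Z^*),
\]
with dimensionless ratios such as $\rho_1 = X^*/X$, $\rho_2 = XZY^*/(X^*Z^*Y)$, $\rho_3 = YZ^*(1-Z)/(Y^*Z(1-Z^*))$, and $\Psi\geq 0$ vanishing only at $Z=Z^*$. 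Since every summand is non-positive, $\dot V_{\mathrm{EE}}\leq 0$, and its vanishing on an invariant set forces $X=X^*$, each $\rho_i=1$, and $Z=Z^*$, which together with the flow collapses that set to $(X^*,Y^*,Z^*)$; LaSalle then yields global attraction in the interior of $\Omega$.

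The main obstacle I expect is exactly the $(1-Z)$ factor in the vector equation, which destroys the purely bilinear mass-action structure that makes the classical Volterra cancellations transparent in SIR-type proofs. Concretely, the hard part of the computation is (i) identifying the right nonlinear weight inside $H$, equivalently the right ``compatibility'' argument $\rho_3$ of $g$, that absorbs this factor into a non-negative Volterra contribution, and (ii) verifying that the three equilibrium identities force all remaining cross-terms in $\dot V_{\mathrm{EE}}$ to cancel for exactly one choice of $K$. Once this algebraic bookkeeping is complete, the LaSalle step and the identification of the invariant subset with the singleton $\{(X^*,Y^*,Z^*)\}$ are routine.
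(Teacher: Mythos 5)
Your treatment of the disease-free case is correct and is actually a different (and simpler) route than the paper's: the paper uses $V=X-\log X+Y+\tfrac{\delta}{\mu_v}Z$, whose derivative $-\mu_h\tfrac{(1-X)^2}{X}-(\mu_h+\gamma)(1-R_0)Y-R_0ZY$ penalises the $X$-direction directly, whereas your linear functional $\sigma Y+(\mu_h+\gamma)Z$ needs LaSalle to recover $X\to1$ but avoids the logarithm. Both work; the verification of $\dot V_{\mathrm{DFE}}\le \mu_v(\mu_h+\gamma)(R_0-1)Z$ on your region $\Omega$ is correct.

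The endemic case, however, contains a genuine gap: the entire difficulty of the theorem sits precisely in the step you defer as ``algebraic bookkeeping,'' and the outcome you assert does not materialise. With the Goh--Volterra ansatz and the choice $K=\delta X^*/\mu_v$, $H(Z,Z^*)=Z^*g(Z/Z^*)$ (which is what the paper uses, and is forced if the terms linear in $Z$ are to cancel), the derivative does \emph{not} collapse to $-\mu_h\tfrac{(X-X^*)^2}{X}-\delta X^*Z^*\sum_i g(\rho_i)-\mu_v\Psi$. What one actually gets is
\[
\dot V=\mu_h\Bigl[3-X^*-X-\tfrac{X^*}{X}\Bigr]+\tfrac{\sigma\delta X^*}{\mu_v}\,Y\Bigl[2Z^*-Z-\tfrac{Z^*}{Z}\Bigr]-\tfrac{\delta XZY^*}{Y},
\]
in which the first bracket is \emph{positive} at the equilibrium (it equals $3\mu_h(1-X^*)$) and the last term is negative but does not vanish there; no grouping into $g$-terms is available. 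The paper closes the argument by splitting off $2X^*$ and $2Z^*$ via $X+\tfrac{(X^*)^2}{X}\ge 2X^*$, and then applying the arithmetic--geometric mean inequality to the three leftover terms $\mu_h(1-X^*)\tfrac{X^*}{X}$, $\tfrac{\sigma\delta X^*}{\mu_v}\tfrac{Z^*(1-Z^*)}{Z}Y$, $\tfrac{\delta XZY^*}{Y}$, whose product is \emph{constant} in $(X,Y,Z)$ and evaluates exactly to $\bigl(\mu_h(1-X^*)\bigr)^3$, cancelling the residual $3\mu_h(1-X^*)$; a separate computation on the set $\{X=X^*,\,Z=Z^*\}$ is then needed to show $\dot V<0$ off the equilibrium. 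None of this is anticipated in your plan, and your proposed fallback weight $\partial_ZH=(Z-Z^*)/(1-Z)$ makes matters worse: the term $+\delta X^*Z$ produced by $\dot V_1+\dot V_2$ is then left uncancelled (there is no term linear in $Z$ alone coming from $K\,\partial_ZH\cdot\dot Z$), and the contribution $-K\mu_v Z(Z-Z^*)/(1-Z)$ is not sign-definite on its own. So the proposal identifies the right family of Lyapunov candidates but does not contain the idea (the constant-product AM--GM cancellation) that actually makes $\dot V\le0$ provable for this model.
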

\begin{proof}
 Although the proof for $R_0\leq1$ is available in the literature, as discussed above, we provide it here both for the sake of completeness of presentation, and because the proof given here is somewhat simpler that the available ones. 
 
Suppose $R_0\leq 1$ and consider the following Lyapunov function
\[
 V(X,Y,Z)=X-\log X + Y +\frac{\delta}{\mu_v}Z.
\]
Then
\[
 \dot{V}=\dot{X}\left(1-\frac{1}{X}\right)+\dot{Y}+ \dot{Z}
\]
On substituting, we obtain:
\[
 \dot{V}=-\mu_h\frac{(1-X)^2}{X}-(\mu_h+\gamma)(1-R_0)Y-R_0ZY,
\]
which is negative for $0\leq R_0\leq1$, and $X,Y,Z$ in $\rnn$. 

For $R_0>1$, let
\[
 V(X,Y,Z)=X-X^*\log \frac{X}{X^*} + Y -Y^*\log\frac{Y}{Y^*}+\frac{\delta X^*}{\mu_v}\left(Z-Z^*\log\frac{Z}{Z^*}\right).
\]
Then we have
\begin{align*}
 \dot{V}=& \dot{X}\left(1-\frac{X^*}{X}\right) + \dot{Y}\left(1-\frac{Y^*}{Y}\right) + \frac{\delta X^*}{\mu_v}\dot{Z}\left(1-\frac{Z^*}{Z}\right)\\
=& \mu_h\left[1+X^* -X -\frac{X^*}{X}\right] + (\mu_h+\gamma)Y^* +\frac{\delta X^*}{\mu_v}Z^* + \frac{\sigma\delta X^*}{\mu_v}(1-Z)Y+\\
&\qquad -\delta\frac{XZY^*}{Y}-(\mu_h+\gamma)Y-\frac{\sigma\delta X^*Z^*}{\mu_v}(1-Z)\frac{Y}{Z}\\
\end{align*}
On using that
\[
(\mu_h+\gamma)Y^*=\mu_h(1-X^*)
\quad\text{and}\quad
\frac{\delta X^*}{\mu_v}Z^* =\mu_h(1-X^*),
\]
we arrive at
\begin{align*}
\dot{V}=&\mu_h\left[3-X^*-X-\frac{X^*}{X}\right]+(\mu_h+\gamma)\left(R_0X^*(1-Z^*)-1\right)Y+\\
&\qquad +\frac{\sigma\delta X^*}{\mu_v}Y\left[2Z^*-Z-\frac{Z^*}{Z}\right]-\frac{\delta XZY^*}{Y}.\\
\end{align*}
Since
\[
 R_0(1-Z^*)=R_0\frac{1+D_0}{1+R_0D_0}=\frac{1}{X^*},
\]
we are left with
\[
 \dot{V}=\mu_h\left[3-X^*-X-\frac{X^*}{X}\right]+\frac{\sigma\delta X^*}{\mu_v}Y\left[2Z^*-Z-\frac{Z^*}{Z}\right]-\frac{\delta XZY^*}{Y}.
\]
Writing
\[
 X+\frac{X^*}{X}=X+\frac{(X^*)^2}{X}+\frac{X^*(1-X^*)}{X},
\]
using that
\[
 X+\frac{(X^*)^2}{X}\geq 2X^*,
\]
and on noticing that a similar calculation holds for the second bracket, we find:
\begin{equation}
 \dot{V}\leq 3\mu_h(1-X^*) -\mu_h(1-X^*)\frac{X^*}{X}-
\frac{\sigma\delta X^*}{\mu_v}\frac{Z^*(1-Z^*)}{Z}Y-\frac{\delta XZY^*}{Y}.
\label{eqn:part_ineq}
\end{equation}
%

Let us write
\[
R=-\mu_h(1-X^*)\frac{X^*}{X}-
\frac{\sigma\delta X^*}{\mu_v}\frac{Z^*(1-Z^*)}{Z}Y-\frac{\delta XZY^*}{Y}.
\]
By the Arithmetic-Geometric inequality, we have
\begin{align*}
R\leq& -3\left[\mu_h(1-X^*)X^*\frac{\sigma\delta}{\mu_v}X^*Z^*(1-Z^*)\delta Y^*\right]^{1/3}\\
&=-3\left[\mu_h^2(1-X^*)^3D_0X^*R_0(1-Z^*)\delta\right]^{1/3}\\
&=-3(1-X^*)\left[\mu_h^2D_0\delta\right]^{1/3}\\
&=-3(1-X^*)\mu_h.
\end{align*}
Hence, we have that 
\[
\dot{V}\leq0,
\]
in $\rnn$.

In order to show that the $V$ is a strict Lyapunov function, we recall that the inequality in \eqref{eqn:part_ineq} is strict unless $X=X^*$ and $Z=Z^*$. However, in this case we have that
\begin{align*}
R&=-\mu_h(1-X^*)-\frac{\sigma\delta X^*}{\mu_v}(1-Z^*)Y-\frac{\delta X^* Z^* Y^*}{Y}\\
&=-\mu_h(1-X^*)-(\mu_h+\gamma)Y-\delta D_0(1-X^*)\frac{Y^*}{Y}\\
&=-\mu_h(1-X^*)\left[1+\frac{Y}{Y^*}+\frac{Y^*}{Y}\right]
\end{align*}
Since
\[
\frac{Y}{Y^*}+\frac{Y^*}{Y}\geq2,
\]
with equality only when $Y=Y^*$, we conclude that $R<-\mu_h(1-X^*)$, except on $(X,Y,Z)=(X^*,Y^*,Z^*)$ where we have equality.
Therefore, we have $\dot{V}<0$ in  $\rnn$, except in the endemic equilibrium. 
\qed
\end{proof}

\section{The fast vector dynamics}
\label{sec:fvd}

As discussed above, we want to  describe the dynamics of system~\eqref{red:sys}, when we have:
\[
 \sigma=\bs\eps^{-1}\quad\text{and}\quad\mu_v=\bmv\eps^{-1},\quad 0<\epsilon\ll1,
\]
while all other parameters are of order one. 

Direct substitution in \eqref{red:sys} lead us to the following initial value problem:
\begin{equation}
 \left\{ 
\begin{array}{rcl}
 \dot{X} &=& \mu_h(1-X)-\delta XZ\\
\dot{Y}  &=& \delta XZ - (\mu_h+\gamma)Y\\
\eps\dot{Z}  &=&  \bs(1-Z)Y-\bmv Z
\end{array}
\right.
\label{fvd:eps}
\end{equation}
subject to the initial condition 
\[
 X(0)=X_0,\quad Y(0)=Y_0\quad\text{and}\quad Z(0)=Z_0.
\]
On a formal basis, since we already know that the dynamics of \eqref{red:sys} always converges to an equilibrium, we expect that the right hand side of the last equation balances out, leaving $\dot{Z}\approx0$, i.e, that the vector population is nearly in equilibrium. Under these hypothesis, we obtain the following system:
\begin{equation}
 \left\{ 
\begin{array}{rcl}
 \dot{X} &=& \mu_h(1-X)-\delta \frac{\sigma XY}{\sigma Y+\mu_v}\\
\dot{Y}  &=& \delta \frac{\sigma XY}{\sigma Y+\mu_v} - (\mu_h+\gamma)Y\\
\dot{Z}  &=&  0
\end{array}
\right.
\label{fvd:heu}
\end{equation}
Notice that system~\eqref{fvd:heu} can be seen as a SIR system with a modified, rational, incidence rate. While the above derivation  is heuristic, we now show that it can obtained from a consistent multiscale asymptotic expansion and, moreover, that such an expansion can be rigorously justified.

\subsection{Asymptotic expansion}
\label{subsec:asymp_exp}
Let 
\[
 \eps\tau=t.
\]
Then, we seek a composite  expansion of the form
\begin{align*}
 X&=X^0(t) +  \O(\eps)\\
Y&=Y^0(t) + \O(\eps)\quad\text{and}\\
Z&=Z^0(t)+\hat{Z}^0(\tau)+ \O(\eps),
\end{align*}
where
\[
 \lim_{\tau\to\infty}\hat{Z}^0(\tau)=0.
\]

On substituting the proposed expansion in \eqref{fvd:eps}, we obtain  to  leading order  the following differential-algebraic system:
\begin{align*}
 X^0_t=&\mu_h(1-X^0)-\delta X^0Z^0\\
Y^0_t=&\delta X^0Z^0-(\mu_h+\gamma)Y^0\\
0=&\bs(1-Z^0)Y^0-\bmv Z^0.
\end{align*}
This yields
\begin{equation}
 Z^0(t)=\frac{\bs Y^0(t)}{\bs Y^0(t)+\bmv}.
 \label{eqn:slaving}
\end{equation}
and hence, we obtain the system:
\begin{equation}
\left\{ 
\begin{array}{rcl}
 X^0_t&=&\mu_h(1-X^0)-\delta \frac{\bs X^0Y^0}{\bs Y^0+\bmv}\\
Y^0_t&=&\delta \frac{\bs X^0Y^0}{\bs Y^0+\bmv}-(\mu_h+\gamma)Y^0\\
\end{array}
\right.
\label{lo:fastvector}
\end{equation}
with initial condition $X^0(0)=X_0$ and $Y^0(0)=Y_0$.

Notice that, in general, we  have $Z^0(0)\not=Z_0$. Such a mismatch in the initial condition, should be corrected by $\hat{Z}^0$. Thus, $\hat{Z}^0$ should satisfy both $\hat{Z^0(0)=Z_0-Z^0(0)}$ (it adjusts for the correct initial condition) and $\hat{Z}^0(\tau)\to0$, as $\tau\to\infty$ (it has a local character). In order to solve for $\hat{Z}^0$, we first observe that
\[
 X^0(t)=X^0(\eps\tau)=X^0(0)+\eps\tau X^0_t(0)+\O(\eps^2),
\]
with similar expansions for $Y^0(t)$ and $Z^0(t)$.

Hence, we find that $\hZ^0$ satisfies
\[
 \hZ^0_\tau=-(\bs Y_0+\bmv)\hZ^0,
\]
i.e
\begin{equation}
 \hat{Z}^0(\tau)=\left(Z_0-\frac{\bs Y_0}{\bs Y_0+\bmv}\right)\e^{-(\bs Y_0 +\bmv)\tau}.
 \label{eqn:corrector}
\end{equation}

\subsection{Global stability analysis of the asymptotic system}

Before we can assert the quality of the approximation provided by  \eqref{eqn:slaving}, \eqref{lo:fastvector} and \eqref{eqn:corrector}, we need a better understanding of the dynamics of the reduced system.  We begin with  the following basic result:
\begin{prop}
Let
\[
 \S=\{(X^0,Y^0)\in\R^2\, |\, X^0+Y^0\leq 1,\quad X^0,Y^0\geq0\}.
\]
Then $\S$ is invariant by the flow of \eqref{lo:fastvector}. In particular, the corresponding solutions are global in time.
\end{prop}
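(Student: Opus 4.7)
The plan is to show positive invariance by examining the sign of the vector field on each of the three faces of the triangle $\S$, and then deduce global existence from boundedness and local Lipschitz regularity of the right-hand side.

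First I would check that the right-hand side of \eqref{lo:fastvector} is locally Lipschitz on a neighbourhood of $\S$ in $\R^2$. The only potentially singular term is the rational incidence $\bs X^0 Y^0 / (\bs Y^0 + \bmv)$, but since $\bmv > 0$ the denominator is bounded away from zero on any bounded set containing $\S$. Hence a unique local solution exists for every initial datum in $\S$.

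Next I would analyse the boundary $\partial\S$ piece by piece. On the face $\{X^0 = 0,\ 0 \leq Y^0 \leq 1\}$ one has $X^0_t = \mu_h > 0$, so the field points strictly inward. On the face $\{Y^0 = 0,\ 0 \leq X^0 \leq 1\}$ one has $Y^0_t = 0$ and $X^0_t = \mu_h(1-X^0) \geq 0$, so this segment is itself invariant. On the hypotenuse $\{X^0 + Y^0 = 1,\ X^0, Y^0 \geq 0\}$ I would add the two equations to obtain
\[
(X^0 + Y^0)_t = \mu_h(1 - X^0 - Y^0) - \gamma Y^0 = -\gamma Y^0 \leq 0,
\]
so the field points inward (tangentially to the $Y^0 = 0$ corner when $Y^0 = 0$). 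Combining the three estimates via a standard argument (e.g.\ Nagumo's tangent-cone criterion, or a straightforward comparison using the fact that a trajectory cannot first cross a face where the outward component of the vector field is non-positive), I conclude that $\S$ is positively invariant.

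Finally, since every orbit starting in $\S$ remains in the compact set $\S$, the solution cannot escape to infinity in finite time, and the standard continuation theorem for ODEs extends it to all $t \geq 0$. The main (minor) subtlety I expect to handle carefully is the tangency along $Y^0 = 0$, where the flow does not strictly point inward; this is resolved by noting that the $Y^0 = 0$ segment is itself an invariant subset, so trajectories starting there simply slide along it without leaving $\S$.
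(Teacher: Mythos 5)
Your proposal is correct and follows essentially the same route as the paper: invariance of the segment $Y^0=0$, strict inward pointing on $X^0=0$, and the sign of $(X^0+Y^0)_t$ on the hypotenuse, followed by global existence from compactness. You are in fact slightly more careful than the paper, which asserts $(X^0+Y^0)_t\leq 0$ without restricting to the face $X^0+Y^0=1$ where the identity $(X^0+Y^0)_t=-\gamma Y^0$ actually makes the sign clear.
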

\begin{proof}
 Since $Y^0=0$ is an invariant set, a solution with $Y^0\geq0$ at any time,  will remain this property. Also, when $X^0=0$, the flow points inside $\S$. Thus, for an initial condition in $\S$, we also have $X^0\geq0$. Finally, $(X^0+Y^0)_t\leq 0$.
 \qed
\end{proof}

System \eqref{lo:fastvector} has two equilibrium points in the positive quadrant:
\begin{enumerate}
 \item $X_*=1$, $Y_*=0$.
\item 
\[
 X_*=\frac{(\mu_h+\gamma)\bmv+\nu_h\bs}{(\mu_h+\delta)\bs}
\quad\text{and}\quad
Y_*=\frac{\delta\mu_n\bs-(\mu_h^2+\gamma\mu_h)\bmv}{\mu_h^2  + (\gamma + \delta) h + \delta \gamma)\bs}.
\]
\end{enumerate}
\begin{rmk}
 Notice that $X_*=X^*$ and $Y_*=Y^*$. Thus the equilibria of \eqref{lo:fastvector} correspond to the projections in the $XY$ plane of the equilibria of \eqref{red:sys}. Notice also that
\[
 Z_*=\frac{\bs Y_*}{\bs Y_*+\bmv}=Z^*.
\]
\end{rmk}
The next result shows that the dynamics of systems \eqref{fhd:eps} and \eqref{lo:fastvector} are qualitatively equivalent, in the sense that either both end up in the disease free equilibrium or in the endemic one.
\begin{thm}
\label{thm:pl_gs}
 Let 
\[
 \bR=\frac{\delta\bs}{\bmv(\nu_h+\gamma)}
\]
Then $\bR=R_0$, and for $R_0\leq1$ the disease free equilibrium is globally asymptotically stable. For $R_0>1$, the endemic equilibrium is globally asymptotically stable.
\end{thm}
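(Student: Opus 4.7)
The identity $\bR = R_0$ follows by direct substitution: since $\sigma = \bs\eps^{-1}$ and $\mu_v = \bmv\eps^{-1}$, the factors of $\eps$ cancel in $R_0 = \sigma\delta/[\mu_v(\mu_h+\gamma)]$, producing exactly $\bR$. For the stability claims, my plan is to mimic the Lyapunov strategy of Theorem~\ref{thm:3d_gs} adapted to the slaved planar system~\eqref{lo:fastvector}. The crucial structural remark is that, after eliminating $Z^0$ via \eqref{eqn:slaving}, the incidence takes the form $\delta X f(Y)$ with $f(Y) = \bs Y/(\bs Y + \bmv)$, a Holling type II response: strictly increasing and strictly concave with $f(0)=0$, so $Y \mapsto f(Y)/Y$ is strictly decreasing on $(0,\infty)$. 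Both cases below close with LaSalle's invariance principle, using the positively invariant triangle $\S$ from the preceding proposition to guarantee bounded orbits.

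For $R_0 \leq 1$, I would try the simple Lyapunov candidate $V(X,Y) = X - \log X + Y$. A direct computation gives $\dot V = -\mu_h(1-X)^2/X + \delta f(Y) - (\mu_h+\gamma)Y$. Using the saturation bound $f(Y) \leq \bs Y/\bmv$, one obtains
\[
\dot V \leq -\mu_h\frac{(1-X)^2}{X} + (R_0 - 1)(\mu_h+\gamma)Y \leq 0,
\]
with equality only at the disease-free equilibrium. Global asymptotic stability then follows.

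For $R_0 > 1$, I would use the Volterra-type function
\[
V(X,Y) = \bigl[X - X_* - X_*\log(X/X_*)\bigr] + \bigl[Y - Y_* - Y_*\log(Y/Y_*)\bigr],
\]
exactly analogous to the one used in the 3D proof. Using the equilibrium identities $\mu_h(1-X_*) = X_* f(Y_*) = (\mu_h+\gamma)Y_*$ to eliminate $\mu_h$ and $\mu_h+\gamma$, $\dot V$ should reduce to the form
\[
\dot V = -\mu_h\frac{(X-X_*)^2}{X} + X_* f(Y_*)\,\Phi(X,Y).
\]
Applying the elementary inequality $\log a \leq a - 1$ to the $X$-dependent cross-terms and cancelling, $\Phi$ is dominated by $g\!\left(f(Y)/f(Y_*)\right) - g(Y/Y_*)$, where $g(x) = x - 1 - \log x \geq 0$. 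Concavity of $f$ together with $f(0)=0$ forces $f(Y)/f(Y_*)$ to lie between $1$ and $Y/Y_*$, and since $g$ is decreasing on $(0,1)$ and increasing on $(1,\infty)$, this difference is non-positive, vanishing iff $Y = Y_*$. Hence $\dot V \leq 0$ with strict inequality away from the endemic equilibrium, and LaSalle concludes.

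The main technical obstacle will be the algebraic reshuffling in the endemic case: organising the terms of $\dot V$ into a block that separates the negative definite quadratic in $X$ from a combination to which $\log a \leq a - 1$ applies cleanly, and then exploiting the Holling II concavity to compare $f(Y)/f(Y_*)$ with $Y/Y_*$. The disease-free case, by contrast, is essentially routine once the linear bound $f(Y) \leq \bs Y/\bmv$ is noticed.
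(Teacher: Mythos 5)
Your argument is correct, but it reaches the conclusion by a genuinely different route from the paper. The paper's proof is a planar phase-plane argument: it exhibits $F(Y^0)=(\bs Y^0+\bmv)/Y^0$ as a Dulac function to exclude closed orbits in $\S$, invokes Poincar\'e--Bendixson, and then settles the two cases by counting equilibria (for $R_0\leq1$) and by a linearisation showing the disease-free state is a saddle whose only incoming orbit lies on the $Y^0=0$ axis (for $R_0>1$). You instead carry the Lyapunov philosophy of Theorem~\ref{thm:3d_gs} down to the reduced planar system. Your computation for $R_0\leq 1$ is exact, and in fact $\dot V<0$ away from $(1,0)$ even when $R_0=1$, because the saturation bound $f(Y)\leq\bs Y/\bmv$ is strict for $Y>0$. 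For $R_0>1$, your Volterra function with the decomposition of $\dot V$ into an arithmetic--geometric-mean block plus the term $g\!\left(f(Y)/f(Y_*)\right)-g\!\left(Y/Y_*\right)$ is the standard Korobeinikov-type argument for concave incidence; the facts you invoke --- $f(Y)/Y$ decreasing, $f(Y)/f(Y_*)$ trapped between $1$ and $Y/Y_*$, and $g$ unimodal with minimum at $1$ --- do close the estimate, so the ``technical obstacle'' you flag is only bookkeeping. The trade-off: the paper's route is shorter and requires no clever Lyapunov function, but it is intrinsically two-dimensional; yours is portable to higher-dimensional or more general concave-incidence models, dispenses with the separate linearisation step at the disease-free equilibrium, and is stylistically consistent with Section~\ref{sec:global}. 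One caveat, shared with the paper's own proof: for $R_0>1$ the global stability statement must be read relative to initial data with $Y>0$, since the $X$-axis is invariant and flows to $(1,0)$.
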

\begin{proof}
The assertion about $R_0$ is clear. For the global stability, let
\[
 F(Y^0)=\frac{\bs Y^0 + \bmv}{Y^0}
\]
Then $F$ is a Dulac function for the system \eqref{lo:fastvector} in  compact subsets of $\S$ that do not intersect the $X^0$ axis, since we have:
\[
 \partial_{X^0}[\mu_h(1-X^0)F(Y^0)-\delta X^0]+\partial_{Y^0}[\delta X^0 - (\mu_h+\gamma)(\bs Y^0 + \mu_v)=
-\mu_hF(Y^0)-\delta -\bs(\mu_h+\gamma)<0.
\]
Thus, the system cannot have a closed orbit in the interior of $\S$. For $R_0\leq1$, the only equilibrium in $\S$ is $(1,0)$.  Thus  all orbits must converge to this equilibrium point. 

The linearisation of \eqref{lo:fastvector} is
\[
 \begin{pmatrix}
  \bar{X}^0\\\bar{Y}^0
 \end{pmatrix}_T
=
\begin{pmatrix}
 -\mu_h-\delta\frac{\bs Y^0}{\bs Y^0+\bmv}&-\bs\delta X^0\frac{\bmv}{(\bs Y^0+\bmv)^2}\\
\delta\frac{\bs Y^0}{\bs Y^0+\bmv}&\bs\delta X^0\frac{\bmv}{(\bs Y^0+\bmv)^2}-\mu_h-\gamma\\
\end{pmatrix}
\begin{pmatrix}
  \bar{X}^0\\\bar{Y}^0
 \end{pmatrix}
\]
For the disease free equilibrium, the eigenvalues of the Jacobian are $-\mu_h$ and $(\mu_h+\gamma)(R_0-1)$. Thus, the disease free equilibrium is a locally asymptotically stable node for $R_0 <1$, and a saddle for $R_0>1$. In the latter case, the unique orbit that approaches the disease free  equilibrium is easily shown to be the intersection of $\S$ with $Y^0=0$. Thus, all the other orbits must approach the endemic equilibrium.
\qed
 \end{proof}

\subsection{Asymptotic convergence and numerical results}

The asymptotic expansions derived in \ref{subsec:asymp_exp} can be shown to be indeed asymptotic. The ideas used here are similar to the ones used to formalise Kinetic Menton's theory---cf. \cite{OMalley1991} and references therein for instance. In particular, we have the following:
\begin{thm}
 Let $\fX^\epsilon(t)=(X(t),Y(t),Z(t))$ and $\fX^0(t,\tau)=(X^0(t),Y^0(t),Z^0(t)+\hat{Z}^0(\tau))$. Denote the uniform norm in $[0,\infty)$ by $\|.\|_\infty$. Then, for sufficient small $\eps>0$,  there exists a constant $C>0$, independent of $\eps$, such that
\[
 \|\fX-\fX^0\|_\infty\leq C\eps.
\]
Moreover, let $\\hfX^0(t)=(X^0(t),Y^0(t),Z^0(t))$. Then there are constants $C_1,C_2>0$  such that, for $t>C_1\epsilon$, we have
\[
\|\fX^\epsilon(t)-\hfX^0(t)\|\leq C_2\epsilon.
\]
\label{asymp_thm}
\end{thm}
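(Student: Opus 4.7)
The plan is to cast \eqref{fvd:eps} as a slow-fast system in standard form ($(X,Y)$ slow, $Z$ fast) and to combine three ingredients: (i) the exponential contractivity of the fast direction, (ii) a Gronwall argument on bounded time, and (iii) the global asymptotic stability of the reduced flow provided by Theorem~\ref{thm:pl_gs} to extend the bound to $[0,\infty)$. Normal hyperbolicity of the slow manifold \eqref{eqn:slaving} is automatic, since the partial derivative of the fast right-hand side with respect to $Z$ equals $-(\bs Y+\bmv)$, bounded above by $-\bmv<0$ uniformly in the physically admissible region.

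First I would substitute the composite ansatz $\fX^0(t,t/\eps)$ into \eqref{fvd:eps} and compute the residuals. By construction of $(X^0,Y^0,Z^0)$ and $\hat{Z}^0$ through \eqref{lo:fastvector}--\eqref{eqn:corrector}, together with the invariance of $\S$ for the reduced flow, these residuals are bounded by $C\eps$ uniformly on $[0,\infty)$. Setting $\eta=(E_X,E_Y,E_Z):=\fX^\eps-\fX^0$ and subtracting the two systems produces the variational equation; the fast block reads
\begin{equation*}
\eps\dot{E}_Z = -[\bs Y+\bmv]\,E_Z + \bs(1-Z^0)E_Y + \O(\eps)\text{-terms plus decaying corrector terms},
\end{equation*}
which, by an integrating-factor estimate using $\bs Y+\bmv\ge\bmv$, slaves $E_Z$ to $E_Y$ up to an $\O(\eps)$ overhead. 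Substituting into the slow equations, Gronwall on any fixed interval $[0,T_1]$ yields $\|\eta\|_{\infty,[0,T_1]}\le C_{T_1}\eps$, which in particular resolves the transient layer $[0,K\eps]$ and accommodates the rapid decay of $\hat{Z}^0$ exhibited in \eqref{eqn:corrector}.

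To upgrade this to a uniform-in-time bound, I would choose $T_1$ large enough that the outer trajectory $\hfX^0(T_1)$ lies in a prescribed small neighbourhood of the globally asymptotically stable equilibrium---disease-free if $R_0\le 1$, endemic otherwise---guaranteed by Theorem~\ref{thm:pl_gs}. The Jacobian displayed in the proof of that theorem shows this equilibrium is hyperbolic, with eigenvalues bounded away from the imaginary axis. Fenichel's persistence theorem then supplies an $\O(\eps)$-perturbed exponentially stable equilibrium for the full system \eqref{fvd:eps}, together with a nearby exponentially attracting slow manifold; both $\fX^\eps$ and $\hfX^0$ are pulled into an $\O(\eps)$ neighbourhood of that perturbed equilibrium and remain there. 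Gluing this to the Gronwall bound at $t=T_1$ closes the uniform estimate $\|\fX^\eps-\fX^0\|_\infty\le C\eps$. The second claim then follows at once, since $\fX^0-\hfX^0=(0,0,\hat{Z}^0(t/\eps))$ and \eqref{eqn:corrector} yields $|\hat{Z}^0(\tau)|\le C\e^{-(\bs Y_0+\bmv)\tau}$, which falls to $\O(\eps)$ once $t/\eps$ exceeds a suitable threshold $C_1$.

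The main obstacle is the third step: a pure Gronwall argument on the variational system produces constants that grow exponentially in $T$, so uniformity on $[0,\infty)$ cannot be obtained by energy estimates alone. The essential input is Theorem~\ref{thm:pl_gs}, which lets one trade the open-ended time interval for an asymptotic regime governed by a hyperbolic sink, where Fenichel-type persistence estimates take over. Making this trade rigorous---verifying in particular that the basin of attraction of the perturbed equilibrium contains all of $\S\times[0,1]$ for $\eps$ small, and that the local exponential stability rate survives the $\O(\eps)$ perturbation---is where the bulk of the technical work will lie.
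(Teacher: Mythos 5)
Your strategy is correct in outline and would deliver the theorem, but it follows a genuinely different route from the paper, so a comparison is worthwhile. You take the classical Tikhonov--Fenichel path: estimate the residual of the composite ansatz, slave the fast error to the slow error by an integrating factor (the contraction rate $\bs Y+\bmv\geq\bmv>0$ makes the critical manifold \eqref{eqn:slaving} uniformly attracting), run Gronwall on a finite window $[0,T_1]$, and then invoke Fenichel persistence near the hyperbolic sink supplied by Theorem~\ref{thm:pl_gs} to defeat the exponential growth of the Gronwall constant. The paper instead posits the \emph{exact} decomposition $\bW=\bW^0+\eps\hbW^0+\eps\bQ$, $Z=Z^0+\hZ^0+\eps\bZ$ --- note the explicit slow-variable layer corrector $\hbW^0$, which you do not introduce and do not need --- and exploits the fact that $\mathcal{F}$ and $\mathcal{G}$ are quadratic, so the expansion about the leading-order trajectory terminates with an explicit bilinear remainder rather than a Taylor remainder to be estimated. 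The error is then governed by the integral equation \eqref{iasymp} driven by the fundamental solution $T(t,s)$ of the nonautonomous linearisation \eqref{lin_sys}, and uniformity in time comes from Lemma~\ref{lem:ltb}: because the leading-order solution converges to the globally stable equilibrium (the same essential input, Theorem~\ref{thm:pl_gs}, that you use), the linearisation is eventually negative-definite, $T(t,s)$ is uniformly bounded, and a fixed-point argument closes the estimate; the layer-induced forcings $K$, $L$ are controlled through their $\O(\eps^2)$ integrals. Your version buys generality (it would survive a non-polynomial vector field) and conceptual familiarity, at the price of importing Fenichel's theorem as a black box and of the gluing work you correctly identify as the technical core; the paper's version buys a self-contained, elementary argument tailored to the quadratic structure. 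One shared caveat: at $R_0=1$ the disease-free equilibrium is globally stable but not hyperbolic (one eigenvalue of the reduced Jacobian vanishes), so your appeal to ``eigenvalues bounded away from the imaginary axis'' fails there --- as does the paper's claim of eventual negative-definiteness --- and both arguments as written cover only $R_0\neq1$.
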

The proof of Theorem~\ref{asymp_thm} is given in Appendix~\ref{app:asymp_proof}.

We now present some numerical illustrations. We first observe that the hallmark of the Fast Vector Dynamics is that
\begin{equation}
\frac{\bar{\sigma}(1-Z(t))Y(t)}{\bar{\mu}_vZ(t)}=1.
\label{eqn:ratio_fvd}
\end{equation}
In Figures~\ref{figratio_ic} and \ref{figratio_eps} we check performance of the approximation in terms of \eqref{eqn:ratio_fvd} for two different parameter sets: the first one has $\mu_h=0.4$, $\gamma=0.25$, $\delta=0.5$, $\bar{\mu}_v=0.2$, and $\bar{\sigma}=0.5$; the second set has $\mu_h=0.0005$, with the other parameters values equal to the first set.  Two verifications are performed: in Figure~\ref{figratio_ic}, we verify the performance of the approximation with a fixed small $\epsilon$, but with a number of different initial conditions; in Figure~\ref{figratio_eps}, we now fix an initial condition, but have a number of different values of $\epsilon$.

\begin{figure}[htbp]
\begin{center}
\subfloat[Parameter set 1]{\includegraphics[scale=0.345]{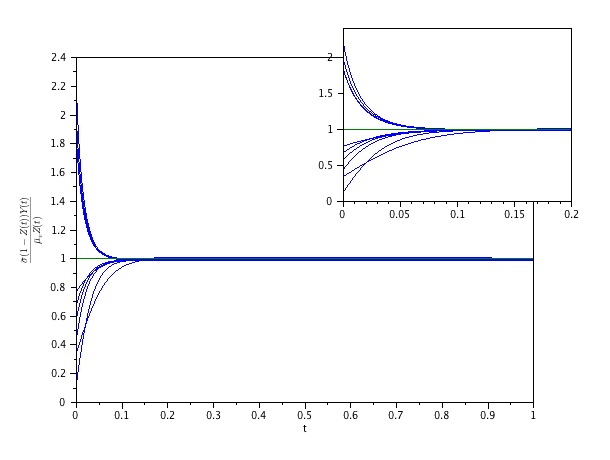}}
\subfloat[Parameter set 2]{\includegraphics[scale=0.345]{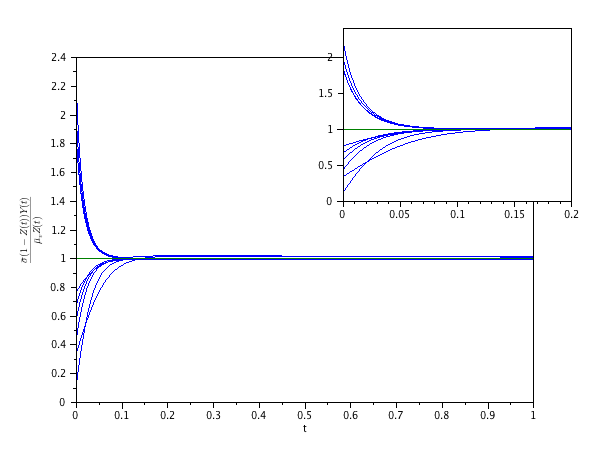}}
\caption{We show how the ratio in \eqref{eqn:ratio_fvd} is attained for the set of parameters described in the text. In these numerical experiments we have $\epsilon=0.01$, and a set of ten initial conditions that was randomically generated---but kept constant for the two figures. For both sets, the endemic equilibrium is globablly stable, but typically these equilibria are attained  arround time $t=10$ for the first parameter set, and about $t=700$ for the second set. Hence, convergence to \eqref{eqn:ratio_fvd} at earlier times, as shown in the graphs, is not a consequence of convergence to equilibrium.\label{figratio_ic}}
\end{center}
\end{figure}

\begin{figure}[htbp]
\begin{center}
\subfloat[Parameter set 1]{\includegraphics[scale=0.345]{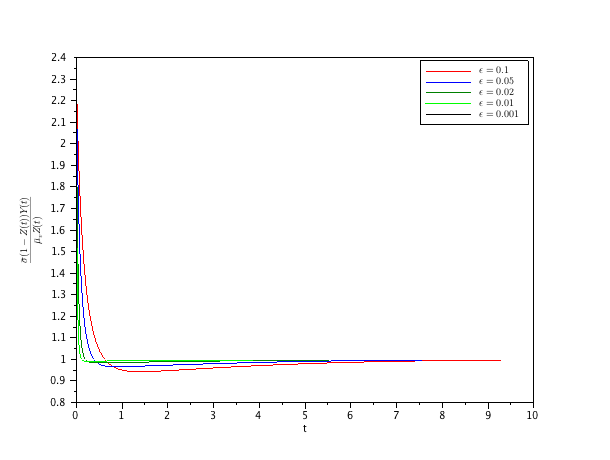}}
\subfloat[Parameter set 1 --- detail]{\includegraphics[scale=0.345]{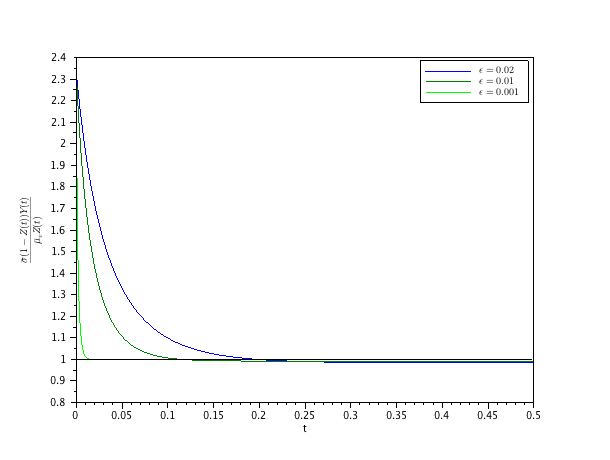}}\\
\subfloat[Parameter set 2]{\includegraphics[scale=0.345]{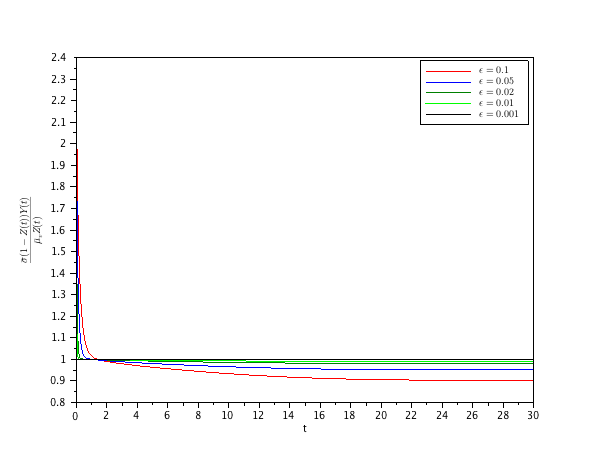}}
\subfloat[Parameter set 2 --- detail]{\includegraphics[scale=0.345]{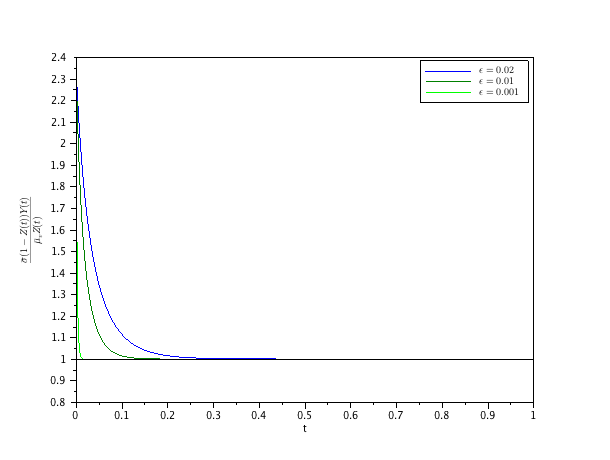}}
\caption{The dependence of the ratio given by \eqref{eqn:ratio_fvd} for a fixed initial condition as a function of the gauge parameter $\epsilon$.   For the two parameter sets, the results are similar in the sense that for $\epsilon\leq 0.02$, the approximation seems to perform very efficiently. \label{figratio_eps}.}
\end{center}
\end{figure}

\newpage

We now present further results for a third parameter set in Figure~\ref{fig:vnr_fvd}. There we  compare the full model with the asymptotic model. As expected, the approximation of $X(t)$ by $X^0(t)$ and of $Y(t)$ by $Y^0(t)$ are indeed uniform for all time, while the approximation of $Z(t)$ by $Z^0(t)$ fail to be uniform in an initial layer. Notice also that such non-uniform behaviour is suppressed by including the corrector term in the initial layer. This parameter set has $\mu_h=0.005$, $\gamma=0.4$, $\delta=0.4$, $\bar{\mu}_v=0.2$, and $\bar{\sigma}=0.5$.

\begin{figure}[htbp]
\begin{center}
\subfloat[]{\includegraphics[scale=0.345]{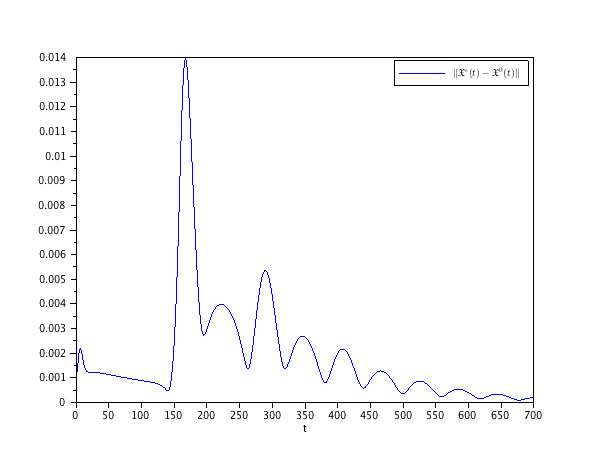}}
\subfloat[]{\includegraphics[scale=0.345]{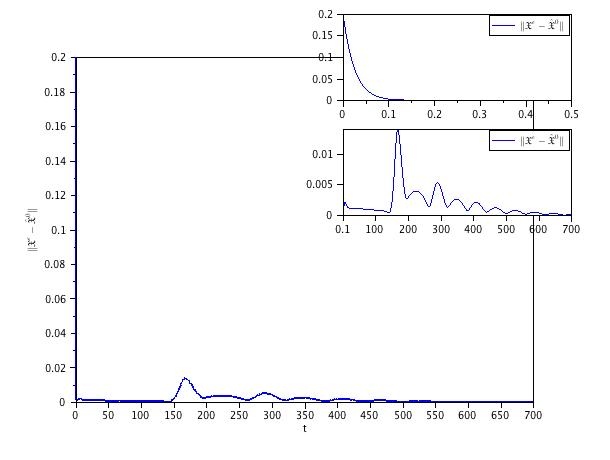}}\\
\subfloat[]{\includegraphics[scale=0.345]{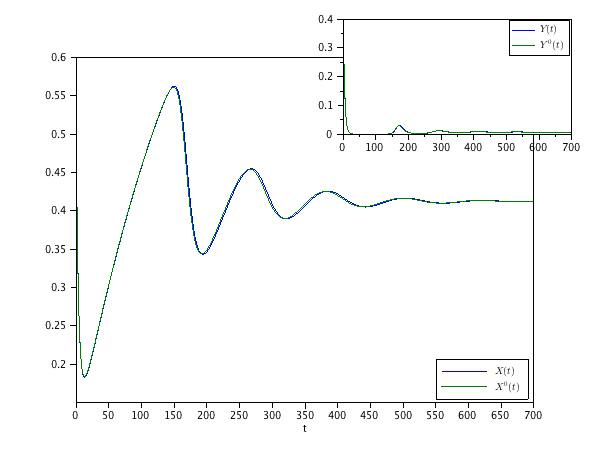}}
\subfloat[]{\includegraphics[scale=0.345]{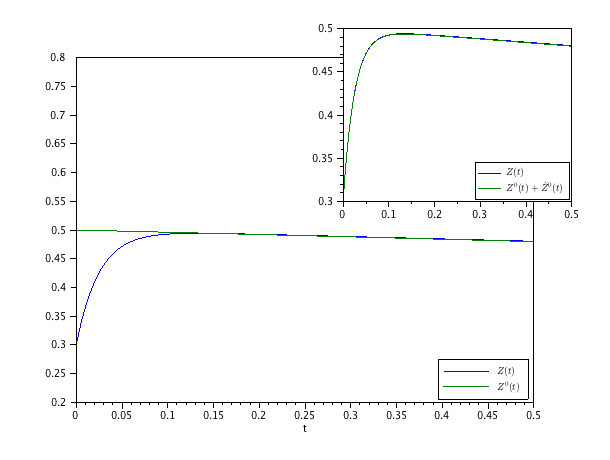}}
\caption{In (a) we show that the error in the composite approximation stays well within the expected bounds. The non uniformity within an initial transition layer is first seen in (b) when compare the norms of the leading-order  terms without the transition corrector in $Z$. The insets show that larger discrepancy indeed comes from the lack of the corrector $\hZ$. Such non uniformity is also clearly depicted in (d), with the inset showing the efectiveness of the composite approximation for the $Z$ component, while (c) shows that the leading order for $X$ (and $Y$ in the inset)---without any correctors---work also effectively well. \label{fig:vnr_fvd}}
\end{center}
\end{figure}

\begin{figure}[htbp]
\begin{center}
\includegraphics[scale=0.5]{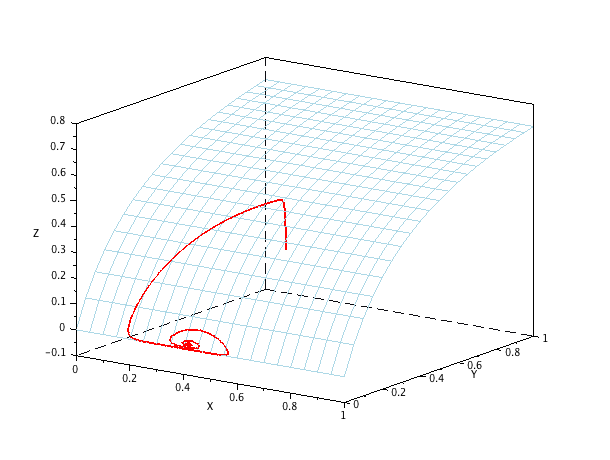}
\caption{The slow manifold dynamics. As already indicated by the Figures~\ref{figratio_ic} and \ref{figratio_eps}, we also have  that the most of the trajectory remains asymptotically close to the manifold. Parameter values are as in figure~\ref{fig:vnr_fvd}}
\label{fig:slowm_fvd}
\end{center}
\end{figure}

\newpage

\section{The fast host dynamics}
\label{sec:fhd}

Analogous to the previous case, we now assume that the dynamics of the host is much faster than the dynamics of the vector. In this case, one might expect that the  host population is nearly in equilibrium. Hence, we should have $\dot{X}\approx0$ and $\dot{Y}\approx0$, i.e, we should have the system
\begin{equation}
 \left\{ 
\begin{array}{rcl}
 0 &=& \mu_h(1-X)-\delta \frac{\sigma XY}{\sigma Y+\mu_v}\\
 0  &=& \delta \frac{\sigma XY}{\sigma Y+\mu_v} - (\mu_h+\gamma)Y\\
\dot{Z}  &=&  \sigma(1-Z)Y-\mu_v Z
\end{array}
\right.
\label{fhd:heu}
\end{equation}
Equation \eqref{fhd:heu} is a SI system for the vector, with a modified incidence rate.

In order to justify system~\eqref{fhd:heu}, we assume that
\[
 \delta=\bd\eps^{-1},\quad \mu_h=\bmh\eps^{-1}\quad\text{and}\quad\gamma=\bg\eps^{-1}.
\]
Thus, we are interested in solving
\begin{equation}
 \left\{ 
\begin{array}{rcl}
 \eps\dot{X} &=& \bmh(1-X)-\bd XZ\\
\eps\dot{Y}  &=& \bd XZ - (\bmh+\bg)Y\\
\dot{Z}  &=&  \sigma(1-Z)Y-\mu_v Z
\end{array}
\right.
\label{fhd:eps}
\end{equation}
subject to the initial condition 
\[
 X(0)=X_0,\quad Y(0)=Y_0\quad\text{and}\quad Z(0)=Z_0.
\]
In what follows, we formally derive the leading order asymptotic expansion and provide a global analysis together with some numerical results. The proof that this expansion is asymptotic is very similar to the Fast Vector Dynamics regime, and hence it is omitted.

\subsection{Asymptotic expansion}

As before, we let $\eps\tau=t$. The asymptotic expansion now take the following form: 
\begin{align*}
 X(t)&=X^0(t)+\hat{X}^0(\tau)+\O(\eps),\\
Y(t)&=Y^0(t)+\hat{Y}^0(\tau)+\O(\eps),\\
Z(t)&=Z^0(t)+\O(\eps).
\end{align*}
Here, we  also have
\[
 \lim_{\tau\to\infty}(\hat{X}^0(\tau),\hat{Y}^0(\tau)=0.
\]

At leading order, we have
\begin{align*}
 0&=\bmh(1-X^0)-\bd X^0Z^0\\
0&=\bd X^0Z^0 - (\bmh+\bg)Y^0\\
Z^0_t&=\sigma(1-Z^0)Y^0-\mu_v Z^0
\end{align*}
We solve for $X^0$ and $Y^0$ obtaining
\[
 X^0=\frac{\bmh}{\bd Z +\bmh}
\quad\text{and}\quad
Y^0=\frac{\bd\bmh}{\bmh+\bg}\frac{Z^0}{\bd Z^0+\bmh}.
\]
Thus the last equation becomes
\[
 Z^0_t=\mu_v Z^0\left(R_0D_0\frac{1-Z^0}{Z^0+D_0}-1\right).
\]
Also, we write
\[
 Z^0(t)=Z^0(\eps\tau)=Z^0(0)+\eps\tau Z^0(0)+\O(\eps^2\tau^2),
\]
hence, since $Z^0(0)=Z_0$, we obtain
\begin{align*}
 \hat{X}^0_\tau&=\bmh\hat{X}^0-\bd\hat{X}^0Z_0\\
\hat{Y}^0_\tau&=\bd\hat{X}^0Z_0=(\bg+\bmh)\hat{Y}^0
\end{align*}
We write the solution as
\[
 \begin{pmatrix}
 \hat{X}^0\\
\hat{Y}^0
\end{pmatrix}
=
\e^{tA}
\begin{pmatrix}
X_0-\frac{\bmh}{\bd Z_0+\bmh} \\
Y_0-\frac{\bd\bmh}{\bmh+\bg}\frac{Z_0}{\bd Z_0+\bmh}
\end{pmatrix},
\qquad
A=
\begin{pmatrix}
 -\bmh&-\bd Z_0\\
\bd Z_0&-(\bg+\bmh)
\end{pmatrix}
.
\]
It is straightforward to verify  that the eigenvalues of $A$ always have negative real part, and hence that
\[
 \lim_{\tau\to\infty}(X^0(\tau),Y^0(\tau))=0.
\]

\subsection{Global stability analysis}

The equilibria are $Z_0=0$ and $Z_0=Z^*$. Since
\[
 \frac{\rd }{\rd Z_0}\left(Z_0\left[R_0D_0\frac{1-Z_0}{D_0+Z_0} - 1\right]\right)=
R_0D_0\frac{1-Z_0}{D_0+Z_0}-1-R_0D_0\frac{1+D_0}{(D_0+Z_0)^2}Z_0
\]
At $Z_0=0$, its value is $R_0-1$. So the origin is globally asymptotically stable for $R_0<1$.

At $Z_0=Z^*$ its value is
\[
 \frac{1+R_0D_0}{R_0(1+D_0)}(1-R_0)
\]
Hence $Z^*$ is globally asymptotically stable, if $R_0>1$.

When $R_0=1$, we have
\[
 Z_0\left[R_0D_0\frac{1-Z_0}{D_0+Z_0} - 1\right]=-\frac{D_0Z_0}{D_0+Z_0}(Z_0+1)<0, \quad Z_0\geq0.
\]
Hence, $Z_0=0$ is also globally asymptotic stable when $R_0=1$.

\subsection{Numerical results}

The results for the components are qualitatively similar to the fast vector dynamics, and hence are omitted. Nevertheless, the reduction of the dynamics to the slow manifold is more dramatic in this case as  shown in figure~\ref{fig5}.

\begin{figure}[htbp]
 \includegraphics[scale=0.5]{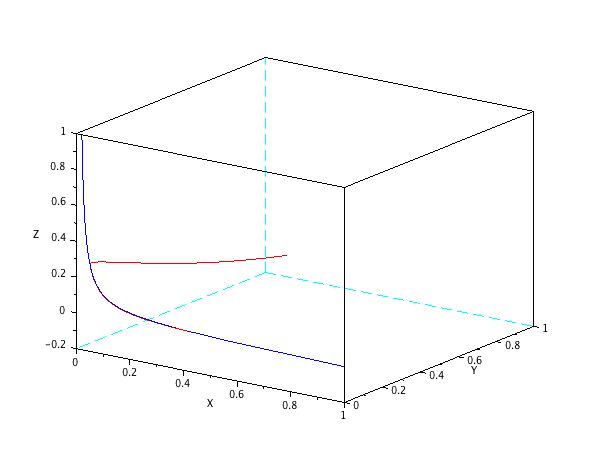}
\caption{Slow manifold dynamics for the fast host regime. Parameter values are $\bmh=0.005$, $\bg=0.4$, $\bd=0.4$, $\mu_v=0.2$, and $\sigma=0.5$. }
\label{fig5}
\end{figure}

\newpage

\section{Concluding remarks}
\label{sec:conclusion}

As observed in the introduction, diseases that are vector-borne  have a number of features that distinguish them from contagious ones. Typically, time scales for the dynamics of the host and vectors are not within the same order, since mosquitoes, for instance, that are a prevalent vector for such diseases have a very fast life cycle compared to humans. With this in mind, we investigated the dynamical consequences of  having host and vector dynamics with distinct time scales in the classical arbovirus model introduced by \cite{Bailey75,Dietz75}. The natural regimes to look in this model are the \textit{fast vector dynamics} (FVD) and \textit{fast host dynamics} (FHD). While the former seems to be the most natural choice, we take the view that there might be scenarios where the latter may be observed.

By means of a formal multiscale asymptotic analysis, we study both regimes. For the FVD, we find the leading order dynamics yields a SIR model for the host, with a modified incidence rate. Thus, the vector is removed from the model being present only parametrically as a function of the host infected fraction. Such a relationship, apart from its mathematical interest, might also be useful in verifying if field data conforms, within the model, with the regime hypothesis.   Additionally, the FHD regime yields an even more dramatic reduction with an SI model for the vectors, again with a modified incidence rate. Numerical results presented show that the approximation is indeed uniformly asymptotic in time. 
An interesting feature of the studied regimes is that they do not imply any condition on $R_0$, and hence are compatible with a variety of disease developments from the point of view of global dynamics. Indeed, for both reduced models, the equilibria are  preserved by the asymptotic approximation, and the global stability dynamics is consistent with the global stability dynamics of the full model.  Finally, we have confirmed rigorously the asymptotic character of the approximation up to the derived order.
Due to the large variance of the parameters as measured in many different cases, we do not claim that these regimes are necessarily the most important,  or the more prevalent. Nevertheless, they do provide a model problem where the reduction can be effectively carried out,  and indicate how the dynamics can be described by such reduced models. In addition, the parameters  that give rise to such regimes, particularly the Fast Vector Dynamics, are consistent with a number of concrete epidemiological scenarios.

The results obtained suggest that  multiscaling reductions similar to the ones described here might be very effective in obtaining simplified models. In particular, as the complexity of  models grows, we expect that such reductions may provide additional insights.

\appendix

\newpage

\section{Proof of Theorem~\ref{asymp_thm}}
\label{app:asymp_proof}

The proof of Theorem~\ref{asymp_thm} is divided in several lemmas.

We write system \eqref{fvd:eps} in a more concise form as
\begin{align*}
 \dot{\bW}&=\mathcal{F}(\bW,Z),\\
\eps\dot{Z}&=\mathcal{G}(\bW,Z).
\end{align*}
with $\bW(t)=(X(t),Y(t))^t$ and $\mathcal{F}$ and $\mathcal{G}$ being the appropriate entries of the right hand
side of \eqref{fvd:eps}.
We  write
\[
 \bW=\bW^0+\eps\hbW^0+\eps\bQ
\quad\text{and}\quad
Z=Z^0+\hZ^0+\eps \bZ,
\]
with
\[
 \hbW^0=\frac{X_0}{\bs Y_0+\bmv}\left(Z_0-\frac{\bs Y_0}{\bs Y_0+\bmv}\right)\e^{-(\bs Y_0 +\bmv)t/\eps}(1,-1)^t.
\]
Notice that since $\hbW^0$ is bounded, we need only to prove that $(\bQ,\bZ)$ exist, are bounded, and are unique. In this case, we then take $C=\|(\bQ,\bZ)\|_\infty$.
First, we observe that
\[
 \dot{\bW^0}+\eps\dot{\hbW^0}=\mathcal{F}(\bW^0,Z^0+\hZ^0)+K(t)(1,-1)^t
\quad\text{and}\quad
\dot{\hZ^0}=\mathcal{G}(\bW^0,Z^0+\hZ^0)+L(t),
\]
where
\[
 K(t)=\delta\hZ^0(t/\eps)(X^0(t)-X_0)
\quad\text{and}\quad
L(t)=\bs\hZ^0(t/\eps)(Y^0(t)-Y_0).
\]
In particular, because of the fast decay of $\hZ^0$, and because $L(0)=K(0)=0$, it follows that there exists a constant $C>0$, such that
\[
 \int_0^\infty K(t)\,\rd t,\quad \int_0^\infty L(t)\,\rd t\leq C\eps^2.
\]
Since $\mathcal{F}$ and $\mathcal{G}$ are quadratic, we write:
\begin{align*}
 \mathcal{F}(\bW,Z)=&\mathcal{F}(\bW^0,Z^0+\hZ^0)+\eps D_{\bW}\mathcal{F}(\bW^0,Z^0+\hZ^0)(\bQ +\hbW^0)+ 
\eps D_{Z}\mathcal{F}(\bW^0,Z^0+\hZ^0)\bZ +\\
&\qquad  +\eps^2\delta\bX\bZ\begin{pmatrix}
 -1\\1
\end{pmatrix};
\\
\mathcal{G}(\bW,Z)=&\mathcal{G}(\bW^0,Z^0+\hZ^0)+\eps D_{\bW}\mathcal{G}(\bW^0,Z^0+\hZ^0)(\bQ +\hbW^0)+ 
\eps D_{Z}\mathcal{G}(\bW^0,Z^0+\hZ^0)\bZ +\\
& +\qquad\eps^2\bs\bY \bZ.
\end{align*}
where $\bQ=(\bX,\bY)^t$.

Let $T(t,s)$ be the fundamental solution to the linearised system
\begin{equation}
 \label{lin_sys}
\begin{pmatrix}
 \dot{\bQ}\\\eps\dot{\bZ}
\end{pmatrix}
=\begin{pmatrix}
D_{\bW}\mathcal{F}(\bW^0,Z^0+\hZ^0)&D_{Z}\mathcal{F}(\bW^0,Z^0+\hZ^0)\\
D_{\bW}\mathcal{G}(\bW^0,Z^0+\hZ^0)&D_{Z}\mathcal{G}(\bW^0,Z^0+\hZ^0)
\end{pmatrix}
\begin{pmatrix}
 \bQ\\\bZ
\end{pmatrix}
\end{equation}
Then direct integration yields
\begin{lem}
\label{lem:iub}
The functions $(\bQ,\bZ)$ satisfy the following integral equation:
\begin{align}
 \begin{pmatrix}
 \bQ\\\bZ
\end{pmatrix}
=&T(t,0)\begin{pmatrix}
 Q_0\\\bZ_0
\end{pmatrix}
+
\int_0^tT(t,s)\begin{pmatrix}
\epsilon\delta\bX\bZ\begin{pmatrix}
 1\\-1
\end{pmatrix}\\
\bs\bY \bZ
\end{pmatrix}\,\rd s + \nonumber\\
&\qquad
+\int_0^tT(t,s)\begin{pmatrix}
D_W\mathcal{F} \cdot\hbW^0(s)\\ D_W\mathcal{G} \cdot\hbW^0(s)-\dot{Z}_0(s)
\end{pmatrix}\,\rd s
-
\frac{1}{\eps^2}\int_0^tT(t,s)\begin{pmatrix}
 \eps K(s)\begin{pmatrix}
 1\\-1
\end{pmatrix}
\\ L(s)
\end{pmatrix}\,\rd s.
\label{iasymp}
\end{align}
Moreover, the last term is bounded uniformly in $\eps$.
\end{lem}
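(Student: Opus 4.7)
The plan splits into deriving \eqref{iasymp} and then verifying the uniform $\eps$-bound on the last term. For the derivation, I would differentiate the ansatz $\bW = \bW^0+\eps\hbW^0+\eps\bQ$, $Z = Z^0+\hZ^0+\eps\bZ$, substitute into \eqref{fvd:eps}, and use the exact Taylor expansion of the quadratic maps $\mathcal{F}, \mathcal{G}$ around $(\bW^0, Z^0+\hZ^0)$ already displayed in the excerpt. The leading nonlinear pieces cancel via the identities $\dot{\bW^0}+\eps\dot{\hbW^0} = \mathcal{F}(\bW^0, Z^0+\hZ^0) + K(t)(1,-1)^t$ and $\dot{\hZ^0} = \mathcal{G}(\bW^0, Z^0+\hZ^0) + L(t)$, which hold by construction, with $K, L$ arising from evaluating $X^0, Y^0$ at $t$ rather than at $0$ inside $\hbW^0$. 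What remains is a linear inhomogeneous ODE system for $(\bQ, \bZ)$ whose homogeneous part is precisely \eqref{lin_sys}. Duhamel with the propagator $T(t,s)$ then assembles the three integral terms in the order shown in \eqref{iasymp} --- the quadratic self-interaction, the forcing driven by $\hbW^0$ together with the slow drift $\dot Z^0$, and the $K, L$ remainder --- with the $1/\eps^2$ normalisation on the last term reflecting bookkeeping of $\eps$ factors (dividing the $\bZ$-equation by $\eps$ to put the system in standard form, combined with the $\eps$ prefactor of $(\bQ,\bZ)$ in the ansatz).

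For the boundedness claim, the key pointwise estimate is $|K(t)|, |L(t)| \leq Ct\,\e^{-ct/\eps}$ for some $c>0$, which follows because $\hZ^0(\tau) = \O(\e^{-c\tau})$ by \eqref{eqn:corrector} and because $X^0(t)-X_0, Y^0(t)-Y_0 = \O(t)$ near $0$. Integration yields $\int_0^\infty|K|\,\rd t, \int_0^\infty|L|\,\rd t \leq C\eps^2$ --- a bound already stated in the excerpt --- so the $1/\eps^2$ prefactor is exactly neutralised, and uniform boundedness of the last term reduces to producing a uniform-in-$\eps$ bound on $\|T(t,s)\|$ for $0 \leq s \leq t$.

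The main obstacle is precisely this uniform control on the singularly perturbed propagator. The coefficient matrix in \eqref{lin_sys} has a fast block $D_Z\mathcal{G}/\eps = -(\bs Y^0(t)+\bmv)/\eps$ which is uniformly negative on $\S$, while the slow block is the Jacobian of the reduced system \eqref{lo:fastvector}. I would block-diagonalise along the slow manifold in a Tikhonov/Fenichel style, decoupling an exponentially contracting fast mode from a slow flow that is uniformly bounded because the reduced trajectories remain in the invariant compact simplex $\S$ (Theorem~\ref{thm:pl_gs}). These two ingredients combine to yield $\|T(t,s)\| \leq C$ uniformly in $\eps$, which closes the estimate.
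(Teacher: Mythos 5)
Your derivation of \eqref{iasymp} is exactly what the paper intends by ``direct integration'': substitute the ansatz, use the displayed exact Taylor expansions of the quadratic maps $\mathcal{F},\mathcal{G}$ about $(\bW^0,Z^0+\hZ^0)$, cancel the leading terms via the two identities involving $K$ and $L$, and apply Duhamel with the propagator $T(t,s)$ of \eqref{lin_sys}; your accounting of the $1/\eps^2$ factor is also right. Where you diverge is in how the uniform-in-$\eps$ bound on the last term is closed. The paper's (implicit) justification rests on the previously stated estimate $\int_0^\infty K,\int_0^\infty L\leq C\eps^2$ together with the propagator bound, but it obtains that propagator bound separately, in Lemma~\ref{lem:ltb}, by arguing that the coefficient matrix becomes negative definite once the underlying trajectory is close to the globally stable equilibrium and then invoking compactness of the set of initial data. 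You instead make the needed estimate $\|T(t,s)\|\leq C$ explicit at this stage and propose to prove it by a Tikhonov/Fenichel block-diagonalisation, exploiting that the fast block $D_Z\mathcal{G}/\eps=-(\bs Y^0+\bmv)/\eps$ is uniformly contracting while the slow block is the linearisation of the reduced flow on the compact invariant simplex $\S$. Both routes reach the same conclusion; yours is arguably the more robust one, since negative definiteness of the full (nonsymmetric, $\eps$-singular) coefficient matrix is delicate, whereas the fast/slow splitting isolates cleanly why the contraction is uniform in $\eps$. One point worth making precise in your version: boundedness of the slow propagator does not follow merely from the trajectory remaining in $\S$; you should add that the reduced trajectory converges to a hyperbolic attracting equilibrium (Theorem~\ref{thm:pl_gs}), so the linearised slow flow is governed for large times by a Hurwitz matrix, and on the complementary finite time interval Gronwall gives the bound. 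With that addition your argument is complete and independent of Lemma~\ref{lem:ltb}.
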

We also have the following large time behaviour result for the linearised system \eqref{lin_sys}:
\begin{lem}
\label{lem:ltb}
 Let $(\bQ(t),\bZ(t))$ be a solution to \eqref{lin_sys}. Then
\[
 \lim_{t\to\infty}(\bQ(t),\bZ(t))=\mathbf{0}.
\]
In particular, the solutions to \eqref{lin_sys} are bounded uniformly in time for any given $\eps$. Moreover, they are also uniformly bounded in $\eps\leq1$, for all $t\geq0$.
\end{lem}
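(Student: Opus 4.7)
The plan is to prove the lemma in two steps: first, show that the coefficient matrix of \eqref{lin_sys} is asymptotically autonomous with Hurwitz limit, yielding decay to zero as $t\to\infty$; second, construct a Lyapunov function with singular-perturbation weights to obtain $\eps$-uniform bounds on $[0,\infty)$.

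For the first step, I would observe that by Theorem~\ref{thm:pl_gs} the reduced trajectory $(\bW^0(t),Z^0(t))$ converges to the globally stable equilibrium $(\bW^*,Z^*)$ of \eqref{fvd:eps}, while \eqref{eqn:corrector} shows that $\hZ^0(t/\eps)\to 0$ exponentially. Consequently the coefficient matrix $A(t,\eps)$ in \eqref{lin_sys} converges as $t\to\infty$ to $A^*(\eps)$, precisely the Jacobian of \eqref{fvd:eps} at that equilibrium. By Theorem~\ref{thm:3d_gs} the equilibrium is globally asymptotically stable, so generically $A^*(\eps)$ is Hurwitz; an explicit factorisation of its characteristic polynomial displays one fast eigenvalue of order $-\eps^{-1}(\bs Y^*+\bmv)$ together with $O(1)$ slow eigenvalues whose negative real parts follow from the linearisation used in the proof of Theorem~\ref{thm:pl_gs}. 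Standard results for asymptotically autonomous linear systems then yield $\lim_{t\to\infty}(\bQ(t),\bZ(t))=0$ for each fixed $\eps$.

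For $\eps$-uniform boundedness I would use the singular-perturbation Lyapunov function $V(\bQ,\bZ)=\bQ^T P\bQ+\eps\alpha\bZ^2$, with positive definite $P$ and $\alpha>0$ to be chosen. Along solutions of \eqref{lin_sys} the factor $\eps^{-1}$ in the $\bZ$ equation cancels the weight $\eps$ exactly, producing
\[
\dot V=\bQ^T[PD_W\mathcal{F}+D_W\mathcal{F}^T P]\bQ+2\bQ^T PD_Z\mathcal{F}\bZ+2\alpha\bZ D_W\mathcal{G}\bQ+2\alpha D_Z\mathcal{G}\bZ^2,
\]
whose coefficients are bounded uniformly in $\eps\leq 1$. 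The last term has coefficient $-2\alpha(\bs Y+\bmv)\leq-2\alpha\bmv<0$; choosing $P$ as the positive definite solution of the algebraic Lyapunov equation $PD_W\mathcal{F}(\bW^*,Z^*)+D_W\mathcal{F}(\bW^*,Z^*)^T P=-I$ and $\alpha$ large enough to absorb the cross terms via Young's inequality gives $\dot V\leq-\lambda V$ for some $\lambda>0$ independent of $\eps\leq 1$, at least for $t\geq T_0$ with $T_0$ chosen so that $D_W\mathcal{F}$ along the trajectory is uniformly close to its equilibrium value.

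The main obstacle is patching this estimate to the transient regime $[0,T_0]$, where the fixed $P$ need not satisfy the Lyapunov inequality and $A(t,\eps)$ is not uniformly bounded in $\eps$. I would exploit the fast-slow structure directly: variation of constants applied to $\eps\dot{\bZ}=D_Z\mathcal{G}\bZ+D_W\mathcal{G}\bQ$, together with the uniform negativity $D_Z\mathcal{G}\leq-\bmv$, yields $|\bZ(t)|\leq|\bZ(0)|+C\sup_{s\leq t}|\bQ(s)|$ with $C$ independent of $\eps$ (the $\eps^{-1}$ in the Duhamel kernel cancels against $\int_0^t\e^{-\bmv(t-s)/\eps}\,\rd s=O(\eps)$); feeding this into a Gronwall argument on the $\bQ$ equation over sufficiently short subintervals and iterating to cover $[0,T_0]$ gives an $\eps$-uniform bound on $(\bQ(T_0),\bZ(T_0))$. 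Combined with the Lyapunov-driven exponential decay on $[T_0,\infty)$, this yields the desired uniform boundedness.
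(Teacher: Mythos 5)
Your overall strategy is the same as the paper's: for fixed $\eps$ deduce decay from the convergence of the frozen coefficient matrix to a stable limit (the paper phrases this as eventual negative-definiteness of $A$ along the trajectory, obtained from Theorems~\ref{thm:3d_gs} and \ref{thm:pl_gs}, followed by decay of the fundamental solution $T(t,T)$), and then argue uniformity in $\eps$ and in the initial data. Your treatment of the transient interval $[0,T_0]$ — integrating the fast equation explicitly, using $D_Z\mathcal{G}\leq-\bmv$ so that the $\eps^{-1}$ in the Duhamel kernel is cancelled by $\int_0^t\e^{-\bmv(t-s)/\eps}\,\rd s=\O(\eps)$, then running Gronwall on $\bQ$ — is correct and is in fact more explicit than anything in the paper, which obtains uniformity only through a compactness argument on the initial conditions.

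There is, however, a genuine gap in your Lyapunov step. With $V=\bQ^TP\bQ+\eps\alpha\bZ^2$ the derivative is bounded by $-|\bQ|^2+2(k_1+\alpha k_2)|\bQ|\,|\bZ|-2\alpha(\bs Y+\bmv)|\bZ|^2$, where $k_1=\|PD_Z\mathcal{F}\|$ and $k_2=\|D_W\mathcal{G}\|$. The cross term coming from $2\alpha\bZ D_W\mathcal{G}\bQ$ scales with $\alpha$ exactly as the good term $-2\alpha\bmv|\bZ|^2$ does, so ``choosing $\alpha$ large'' does not absorb it: negative definiteness of this quadratic form requires $(k_1+\alpha k_2)^2<2\alpha\bmv$ for some $\alpha>0$, i.e.\ the interconnection condition $\bmv>2k_1k_2$, which is neither verified nor true in general — in the relevant regime $\mu_h\ll1$ the solution $P$ of the Lyapunov equation has large norm and the condition fails. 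The standard repair is to measure the fast variable as a deviation from the linearized slow manifold, $\tilde{Z}=\bZ-(D_Z\mathcal{G})^{-1}D_W\mathcal{G}\,\bQ$, so that the cross-coupling reacquires a factor of $\eps$ and the composite Lyapunov function works for $\eps$ small; alternatively, follow the paper and use the weighted norm $|\bQ|^2+\eps|\bZ|^2$ once the full matrix $A$ is negative definite near the equilibrium. Two smaller points: the proof of Theorem~\ref{thm:pl_gs} computes eigenvalues only at the disease-free equilibrium, so the hyperbolicity of the endemic equilibrium that you invoke for the slow block must be checked separately (trace and determinant of the reduced Jacobian); and at $R_0=1$ the limit matrix has a zero eigenvalue, so neither your Hurwitz claim nor the paper's negative-definiteness claim applies there — a weakness your argument shares with the paper's.
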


\begin{proof}
For notation convenience, let us write \eqref{lin_sys} as
\[
 \begin{pmatrix}
 \dot{\bQ}\\\eps\dot{\bZ}
\end{pmatrix}
= A(\bW^0,Z^0+\hZ^0)
\begin{pmatrix}
 \bQ\\\bZ
\end{pmatrix}.
\]
Fix $\eps>0$ and $(X_0,Y_0,Z_0)$. From Theorem~\ref{thm:pl_gs}, we know that
\[
 \lim_{t\to\infty} \left(\bW^0(t),Z^0(t)+\hZ^0(t/\eps)\right)=(\bW^*,Z^*),
\]
where $(\bW^*,Z^*)$ is the globally asymptotic stable equilibrium given by Theorem~\ref{thm:3d_gs}. Therefore, there exists $T>0$, such that $t>T$ implies that $A$ is negative-definite. Since
$\hZ^0\to0$, as $\eps\to0$. We can choose $T$ such this holds for all $0<\eps\leq1$. 

Because of the continuity of the solution with respect to the initial conditions, we have that $T$ is a continuous function of the initial conditions. Since these lie on a compact set, we can pick $T$ such that $A$ is negative definite for all $0<\eps\leq1$ and for all $(X_0,Y_0,Z_0)$, such that $X_0,Y_0\geq0$, $X_0+Y_0\leq1$ and $0<Z_0\leq1$. But then, for any such $(X_0,Y_0,Z_0)$ and $0<\eps\leq1$, we have that 
\[
 \lim_{t\to\infty}T(t,T)=0.
\]
Therefore, for any initial condition $(Q_0,\bZ_0)^t$, we have
\[
 \lim_{t\to\infty}T(t,0)\begin{pmatrix}Q_0\\\bZ_0\end{pmatrix}=
\lim_{t\to\infty}T(t,T)T(T,0)\begin{pmatrix}Q_0\\\bZ_0\end{pmatrix}=0.
\]
\qed
 \end{proof}

\begin{proof}[Proof of Theorem~\ref{asymp_thm}]

First, we observe that the nonlinear term in \eqref{iasymp} is locally Lipschitz, hence a standard fixed point  yields existence and uniqueness for $0\leq t <t_0$ , for some, possibly small, $t_0$.

  From the decaying of $\hbW^0$ and from  Lemma~\ref{lem:iub}, we conclude that the two last terms of \eqref{iasymp} are uniformly bounded in time and $\eps$. 

Moreover, Lemma~\ref{lem:ltb} implies that the first two terms on the right hand side of \eqref{iasymp} are also uniformly bounded in time and in $\eps$, for sufficiently small $\eps$. Thus, we conclude that the same also holds true for  $(Q,R)^t$.
Therefore, the solution to \eqref{iasymp} is globally defined in time, and it is bounded uniformly in $\eps$, if the latter is sufficiently small. 

\qed
\end{proof}

\bibliographystyle{abbrv}
\bibliography{dengue}

\end{document}